\documentclass[12pt, reqno]{amsart}
\usepackage{amssymb}
\usepackage{amsmath}
\usepackage{graphicx,color}
\usepackage{wrapfig,framed}
\usepackage{mathtools}

\usepackage[height=22.7cm, width=16.5cm, hmarginratio={1:1}]{geometry}
\usepackage{hyperref}

\theoremstyle{plain}
\newtheorem{theorem}{Theorem}
\newtheorem{prop}{Proposition}
\newtheorem{lemma}{Lemma}
\newtheorem{cor}{Corollary}
\theoremstyle{definition}

\newcommand{\beq}{\begin{equation}}
\newcommand{\eeq}{\end{equation}}

\newcommand{\nn}{\nonumber}
\newcommand{\e}{\epsilon}
\newcommand{\p}{\partial}

\newcommand{\tr}{{\rm tr}}
\newcommand{\bs}{{\bf s}}

\newcommand{\QQ}{\mathbb{Q}}
\newcommand{\ZZ}{\mathbb{Z}}
\newcommand{\F}{\mathcal{F}}

\usepackage{array}
\usepackage{makecell}
\newcolumntype{M}[1]{>{\centering\arraybackslash}m{#1}}
\newcolumntype{R}[1]{>{\raggedleft\arraybackslash}m{#1}}
\newcolumntype{N}{@{}m{0pt}@{}}

\begin{document}

\title[On $b$-angulations of surfaces]
{On enumeration of $b$-angulations of surfaces from an integrability perspective}
\author{Elba Garcia-Failde, Jianghao Xu, Di Yang, Don Zagier}
\address{Elba Garcia-Failde, Institut de Math\'ematiques de Jussieu-Paris Rive Gauche, Sorbonne Universit\'e, 75252 Paris, France, and Departament de Matem\`{a}tiques, Universitat Polit\`{e}cnica de Catalunya, 08028 Barcelona, Spain}
\email{elba.garcia@upc.edu}
\address{Jianghao Xu, School of Mathematical Sciences, University of Science and Technology of China, 230026 Hefei, P.R.~China}
\email{xjh\_020403@mail.ustc.edu.cn}
\address{Di Yang, School of Mathematical Sciences, University of Science and Technology of China, 230026 Hefei, P.R.~China}
\email{diyang@ustc.edu.cn}
\address{Don Zagier, Max Planck Institute for Mathematics, 53111 Bonn, 
Germany, and International Centre for Theoretical Physics, 34014 Trieste, Italy}
\email{dbz@mpim-bonn.mpg.de}
\date{}
\begin{abstract}
In this paper, we study generating series enumerating polygonal angulations of closed oriented surfaces of fixed genus, focusing on $b$-angulations with $b = 3$ or $b = 2\nu$, $\nu \geq 2$. 
Based on Toda integrability, we establish new structural results in the cases $b = 3$ and $b = 4$. 
Furthermore, via the Hodge--GUE correspondence, we derive a fine structure in the $b = 2\nu$ case, which implies a conjectural statement of Gharakhloo--Latimer.
\end{abstract}
\maketitle
\tableofcontents
\section{Introduction and statements of the main results}\label{introduction}
Enumerating ribbon graphs, also known as (combinatorial) maps,
is a fundamental problem at the interface of mathematics and 
mathematical physics, attracting interest from combinatorics (cf.~\cite{BIZ,CMS,Eynard,HZ,T}), 
and revealing deep relations to quantum field theory~\cite{BIZ,BIPZ,BK,DS,GM, W} and  
geometry~\cite{BCGE,Du96, Du2, DLYZ20, HZ, Y2}.
Let $\mathcal{R}^{\rm conn}_{g}(b_1,\dots,b_k)$ be the set of connected oriented labelled\footnote{A labelled ribbon graph refers to a ribbon graph whose half-edges are labelled.} ribbon graphs of genus $g$ with $k$ vertices of valencies $b_1,\dots,b_k$, and let
$n_{g}(b_1,\dots,b_k)\coloneqq|\mathcal{R}^{\rm conn}_{g}(b_1,\dots,b_k)|$. 
Here, $g\geq 0$ and $b_1,\dots,b_k\geq 1$. 
By the Euler formula, the number $n_g(b_1,\dots,b_k)$ vanishes unless $2-2g-k+\frac{|b|}{2}$ is a positive integer. By looking at the dual graphs, one can also understand $n_g(b_1,\dots,b_k)$ as the number of polygon-angulations with $k$ polygons of sizes $b_1,\dots,b_k$ on a genus $g$ closed oriented surface.

Following~\cite{Du2,DuY},
define a power series of infinitely many variables ${\bf s}=(s_1,s_2,\dots)$ by 
\begin{align}
\F(x,{\bf s};\e) = & \, \frac{x^2}{2\e^2}\Bigl(\log x-\frac32\Bigr) 
- \frac{\log x}{12} + \zeta'(-1) + \sum_{g\geq2} \frac{\e^{2g-2} B_{2g}}{4g(g-1)x^{2g-2}}\nn\\
&+\sum_{g\geq 0}\e^{2g-2}\sum_{k\geq 1}\frac1{k!}\sum_{b_1,\dots,b_k\ge1}n_{g}(b_1,\dots,b_k) \,
s_{b_1} \cdots s_{b_k} x^{2-2g - k + \frac{|{\bf b}|}2}\,,
\label{Fgue1x'}
\end{align}
 called the {\it free energy}, 
and define $Z(x,\bs;\e)\coloneqq e^{\F(x,\bs;\e)}$, called the {\it partition function}. 
Here, $x$ is a formal variable,
$\zeta(s)$ denotes the Riemann zeta function, and $B_m$ denotes the $m$th Bernoulli number. 
It is known from e.g.~\cite{AvM,BIZ,HZ,DuY,Y2} that $Z(x,\bs;\e)$ can be understood as the following integral
\beq\label{matrixintegral}
2^{-n}\pi^{-\frac{n(n+1)}{2}} \e^{-\frac1{12}} G(n+1) \int_{{\mathcal H}(n)}
\exp\Bigl(-\frac1{\e}\tr\Bigl(\frac12 M^2-\sum_{b\geq 1}s_b M^b \Bigr)\Bigr) dM\,,
\eeq 
where $x=n\e$, $G$ denotes
Barnes's $G$-function, and  
\beq
dM = \prod_{1\leq i\leq n} d M_{ii} \prod_{1\leq i<j\leq n} d{\rm Re} M_{ij}\, d{\rm Im}M_{ij}\,.
\eeq

Define $\F_g(x,\bs)\coloneqq[\e^{2g-2}]\F(x,\bs;\e)$, $g\geq 0$, the {\it genus~$g$ free energy}.
Structures for $\F_g(x,\bs)$ have been studied in~\cite{ACKM, Du2, EO09, DLYZ20, DuY, DuY2, Y2}. By definition, 
$\F_g(x,\bs)$ encodes the enumeration of arbitrary tilings on a genus~$g$ surface
(throughout this paper, all surfaces are assumed to be closed and oriented). 
The special case of enumeration of $b$-angulations, with a fixed value of $b$, has attracted a lot of 
interest~\cite{DuY, E, E2, ELT, GL} and will be the main focus of this paper. This corresponds to restricting
$\F(x,\bs;\e)$ to $\bs=(0,\dots,0,s_b=s,0,\dots)$, and by rescaling we can assume $s=1$, so we will consider 
\begin{align}\label{Fgexplicit}
\F_g^{\{b\}}(x)\coloneqq&\,\F_g(x,{\bf s}={\bf 1}_b)
= \delta_{g,0}\frac{x^2}{2}\Bigl(\log x-\frac32\Bigr) 
+\delta_{g,1}\Bigl(-\frac{\log x}{12} + \zeta'(-1)\Bigr)\nn\\
&\,+\, \delta_{g\geq 2}\frac{B_{2g}}{4g(g-1)x^{2g-2}}
+\sum_{k\geq 1}\frac{n_{g}(b^k)}{k!}x^{2-2g+(\frac{b}2-1)k}\,,
\end{align}
where ${\bf 1}_b$ denotes the infinite vector with the $b$th component equal to 1 and all other components equal to 0.

According to e.g.~\cite{AvM, DuY, GMMMO, Y2}, $Z(x,\bs;\e)$ is a tau-function of the Toda lattice hierarchy. The latter will be one of the main tools for this paper. 
Following~\cite{Du2,DuY,DuZ,Y2}, define 
\beq
u=\frac{\p^2 \F^{\{b\}}_0(x)}{\p x^2}\,,\quad w=e^{u}\,.
\eeq

\smallskip

{\bf The case $b=3$.}
Following~\cite{BD,DuY}, 
introduce a power series $v=v(x)=6 x+324 x^2+31104 x^3+\cdots$ as the unique solution to the following cubic equation 
\beq\label{triveqn}
6 \, x = (1-9v+18v^2) \, v\,.
\eeq
According to~\cite{DuY} we have the identity
\beq\label{triweqn}
w = \frac{x}{1-6v}\,.
\eeq
(We will give a new proof of~\eqref{triweqn} in Section~\ref{triangulations}.)
It follows from~\eqref{triveqn},~\eqref{triweqn} that $w=x+36x^2+3240x^3+\cdots$ is the unique solution to the cubic equation
\beq\label{triweqn2}
x^2=w^2-72w^3\,.
\eeq
Equation~\eqref{triweqn2} was also obtained in~\cite{BD}.
Building on, for example,\cite{AvM, DuY, GMMMO} (see also~\cite{BD}) and using a method from~\cite{DYZ}, we will prove
in Section~\ref{triangulations} the following theorem.
\begin{theorem}\label{trithm1}
For $g=0$, we have $\frac{\p^2 \F^{\{3\}}_0(x)}{\p x^2}=\log w$.
For $g=1$, 
\beq\label{triF1form1333}
\mathcal{F}^{\{3\}}_1(x) =  -\frac{1}{12}\log w-\frac{1}{24} \log  (1-108 w)+ \zeta'(-1).
\eeq
For $g\geq2$, $\mathcal{F}^{\{3\}}_g(x)$ has the expression:
\beq\label{triFform1}
\mathcal{F}^{\{3\}}_g(x)=\frac{1-2g}{(2g)!}\,B_{2g}\,\p_x^{2g-2}(\log w) +\sum_{\ell=3g-3}^{5g-5}\frac{a_{g,\ell}}{(1-108w)^{\ell}}\,,
\eeq
where $a_{g,3g-3},\dots,a_{g,5g-5}$ are rational numbers. 
\end{theorem}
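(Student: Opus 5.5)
The plan is to use the Toda structure of $Z$. Since $Z(x,\bs;\e)$ is a Toda tau-function, restricting to $\bs=\mathbf{1}_3$ and using the string/dilaton-type constraints (Virasoro constraints) of the GUE partition function turns the $x$-dependence into an ODE. Concretely, the restricted free energy satisfies the Toda equation
\[
\F(x+\e)+\F(x-\e)-2\F(x)=\log\frac{\e^2\,\p_{s_1}^2\text{-type data}}{\cdots},
\]
which in the standard form reads $e^{\F(x+\e)+\F(x-\e)-2\F(x)} = $ (a quantity expressed through $\p_{s_1}$-derivatives). The $s_1$-derivatives are then eliminated in favour of $x$-derivatives via the string equation restricted to $\bs=\mathbf 1_3$:
\[
\p_{s_1}\F=\tfrac{x}{\e}\,\cdot(\dots)+3\,\p_{s_2}\F\cdots .
\]
For $b=3$ I would instead use the approach of \cite{DuY}: the function $w(x)=\exp\bigl(\F(x+\e)+\F(x-\e)-2\F(x)\bigr)$ at genus zero is given by \eqref{triweqn}, and the genus expansion of $w$ and $v$ is determined by a recursion (the $b=3$ string equation of the Toda/Volterra reduction, a discrete Painlevé-type equation). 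At $g=0$, the statement $\p_x^2\F_0=\log w$ is just the definition, combined with \eqref{triweqn2} obtained from the dispersionless limit of that recursion.

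Next I would set up the loop-type ansatz. Write $\F=\sum\e^{2g-2}\F_g$ and expand the difference equation $\F(x+\e)+\F(x-\e)-2\F(x)=\log W(x;\e)$, with $W=w+\sum_{g\ge1}\e^{2g}W_g$. This gives
\[
\sum_{g}\e^{2g-2}\sum_{j}\frac{2\e^{2j}}{(2j)!}\p_x^{2j}\F_{g}=\log W .
\]
Using the method of \cite{DYZ}, I would invert $\frac{2(\cosh(\e\p_x)-1)}{\e^2\p_x^2}$. Its inverse expansion $\frac{\e\p_x/2}{\sinh(\e\p_x/2)}$ squared produces exactly the Bernoulli terms $\frac{(1-2g)B_{2g}}{(2g)!}\p_x^{2g-2}\log w$. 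The remainder comes from the corrections $\log(W/w)$ and is what must be shown to be a polynomial in $(1-108w)^{-1}$.

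The key step is to prove by induction on $g$ that $W_g/w$ is a polynomial in $z:=(1-108w)^{-1}$ with controlled degree range. For this I would change variables from $x$ to $w$ using \eqref{triweqn2}. Differentiating gives
\[
x\,dx=(w-108w^2)\,dw, \qquad \frac{dw}{dx}=\frac{x}{w(1-108w)},
\]
and $x^2=w^2(1-72w)$, so $\p_x$ acting on functions of $w$ maps polynomials in $z$ (times suitable powers) to such polynomials. One must check that the odd powers of $x$ cancel, which I expect follows from the $\e\to-\e$ symmetry. The recursion for $W_g$ is then a linear equation whose coefficients are rational in $w$ with poles only at $w=1/108$ (the critical point where $dx/dw=0$). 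Degree counting via the weight of $\p_x$ (each $\p_x$ raising the $z$-degree by $3/2$ in the appropriate grading) gives top degree $5g-5$. A lower-order vanishing argument gives the bottom degree $3g-3$: evaluate at $w\to 0$ ($z\to 1$) against the known small-$x$ behaviour, or use homogeneity. Genus one is handled separately by integrating twice and fixing constants with \eqref{Fgexplicit}; that integration yields the $-\frac1{24}\log(1-108w)$ term and the constant $\zeta'(-1)$. For $g\ge2$, integration constants vanish because $\F_g$ must have no polynomial part in $x$ beyond the $B_{2g}$ term.

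I expect the main obstacle to be showing that the double $x$-integration of the non-Bernoulli part stays within polynomials in $z$, that is, that no logarithms or terms like $x$ times a polynomial appear, and pinning down the exact range $3g-3\le\ell\le5g-5$. To get this I would integrate in the variable $w$ and verify exactness using the Toda equation itself: it determines $\F_g$ directly up to a kernel that is killed by the second difference operator. An alternative I would try is to derive directly an ODE in $w$ for $\F_g$ (a loop equation / Dubrovin–Zhang quasi-triviality), which makes the polynomial-in-$z$ structure manifest.
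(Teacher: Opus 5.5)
Your derivation of the genus recursion for $(\widetilde V^{\{3\}}_g,W^{\{3\}}_g)$ and your Bernoulli inversion of the second difference operator agree with the paper. The argument breaks exactly at the step you call the main obstacle, and the remedy you suggest cannot work.

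Put $z=(1-108w)^{-1}$. Then \eqref{tridw2} gives
\[
\p_x^2\,z^j=3888\,j\,z^{j+3}\bigl((j+2)z+2j+3\bigr),
\]
so $z^4$ has no preimage among Laurent polynomials in $z$. Consequently (Lemma~\ref{lem1}), $\p_x^{-2}$ of a polynomial in $z$ with powers $\ge 4$ is the sum of three pieces: a polynomial in $z$ reaching all the way down to $z^1$; a multiple $\gamma\,f$ of $f=72w\sqrt{3-216w}\,{\rm arctanh}\sqrt{3-216w}-72w$; and $\alpha x+\beta$.

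The coefficient $\gamma_g$ is a fixed linear combination of the genus-$g$ coefficients of $\log W^{\{3\}}$. The relation $\log W^{\{3\}}=\F^{\{3\}}(x+\e)+\F^{\{3\}}(x-\e)-2\F^{\{3\}}(x)$ is precisely the equation you are inverting. So ``verifying exactness with the Toda equation itself'' puts no constraint on $\gamma_g$; the kernel of the second difference accounts only for $\alpha_g x+\beta_g$. Your fallback via a loop equation or quasi-triviality is not spelled out.

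What is missing is an independent relation involving only one $x$-integration. It is the elimination of $\p_{s_1}$ that you mention at the start and then drop. The string, scaling and dilaton equations at $\bs={\bf 1}_3$ give $\p_{s_1}\F|_{\bs={\bf 1}_3}=\mathcal{G}^{\{3\}}$, with $\mathcal{G}^{\{3\}}_g=18(2g-2)\F^{\{3\}}_g+18x\p_x\F^{\{3\}}_g$ for $g\ge2$. On the other hand, \eqref{VF} writes $\mathcal{G}^{\{3\}}_g$ as $\p_x^{-1}\widetilde V^{\{3\}}_g$ plus $x$-derivatives of lower $\widetilde V$'s. Since $\p_x^{-1}=\p_w^{-1}\circ\frac{1-108w}{\sqrt{1-72w}}$ and $\widetilde V^{\{3\}}_g$ is $\sqrt{1-72w}$ times a polynomial in $z$ with powers $\ge 3g$, this is rational in $w$ up to a constant. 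Because $(2g-2)f+xf'$ retains the arctanh term with the factor $2g-1\neq 0$, one gets $\gamma_g=0$.

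This is also the real source of the lower bound $3g-3$. The argument of $\p_x^{-2}$ starts at $z^{3g}$, and the terms $z^1,\dots,z^{3g-4}$ of its antiderivative all carry the factor $\gamma_g$. Evaluating at $w\to 0$, i.e.\ $z\to1$, only sees $\sum_\ell a_{g,\ell}$ and cannot separate individual coefficients.

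Your reason for the vanishing of the integration constants is also incorrect. The series $\F^{\{3\}}_g-\frac{B_{2g}}{4g(g-1)}x^{2-2g}=\sum_{d\ge1}\frac{n_g(3^{4g-4+2d})}{(4g-4+2d)!}x^d$ is a full power series in $x$ with nonzero linear term (Table~\ref{tritable}). Every other term of the ansatz is itself a power series at $x=0$. So small-$x$ information yields one linear relation (zero constant term), not $\alpha_g=\beta_g=0$.

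In fact $\beta_g=0$ says the $z$-polynomial has no $z^0$ term, which is a statement at $z=0$, i.e.\ $w=\infty$, and it needs global input. The paper gets it from the Dubrovin--Zhang representation \eqref{fgfmgequal}. There $F^{\mathbb{P}^1}_g$ has degree $2g-2$, and $\p_x^kv=O(w^{\frac12-\frac32k})$, $\p_x^k\log w=O(w^{-\frac32k})$. Hence $\F^{\{3\}}_g\to0$ as $w\to\infty$, which kills both $\alpha_gx=\alpha_g w\sqrt{1-72w}$ and $\beta_g$. The $\mathcal{G}$-argument above also kills $\alpha_g$, but not $\beta_g$.

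Smaller points:
\begin{itemize}
\item Parity is settled by the half-shift $\widetilde V^{\{3\}}(x,\e)=V^{\{3\}}(x-\frac{\e}{2},\e)$, after which both equations contain only even $x$-derivatives. It is not an $\e\to-\e$ symmetry of $V^{\{3\}}$ itself.
\item $\p_x^2$ raises the top $z$-degree by $4$ and the bottom one by $3$. That is how the top degree $5g-1$ of $W^{\{3\}}_g/w$ becomes $5g-5$; it is not a uniform weight $3/2$ per derivative.
\item Your $g=1$ treatment works, since $W^{\{3\}}_1/w=162(2z^4+3z^3)=\frac1{24}\p_x^2\log z$. You must, however, fix the coefficient of $x$ using $n_1(3^2)=3$, not only the constant $\zeta'(-1)$.
\end{itemize}
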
	

For $g\ge2$, the coefficient $a_{g,5g-5}$ has the expression
\beq\label{tritopcoef}
a_{g,5g-5}=\frac{162^g}{3888}\frac{C_g}{(5g-3)(5g-5)},
\eeq
which can be straightforwardly deduced using~\eqref{triFform1} and the well-known result (see e.g.~\cite{BD}) 
\beq\label{tricorrasymp}
n_g(3^{2j})\sim \frac{16}{\sqrt{3}} \, \frac{\bigl(108\sqrt{3}\bigr)^j}{\bigl(256\sqrt{3}\bigr)^{g}}  \frac{(2j)!\,(2j)^{\frac{5g-7}{2}}}{\Gamma({\frac{5g-1}2})}\,C_g \quad (j\to \infty) \,.
\eeq
Here $C_g$, with $C_0=-1, \ldots$, are constants determined (cf.~e.g.~\cite{JK, Ka, BD,DYZ, Eynard, IZ, YZZ}) by requiring that
the formal series $U=\sum_{g\geq 0}C_g X^{\frac{1-5g}{2}}$ satisfies the Painlev\'e I equation
\beq\label{painleveI}
\frac{d ^2 U}{d X^2}+\frac{1}{16}U^2-\frac{1}{16}X=0\,.
\eeq

Table~\ref{tritable} consists of $n_g(3^{4g-4+2d})$ for $g=0,\dots,4$ and $d=1,\dots,5$.
For the reader's convenience, we also provide
\begin{align}
		\mathcal{F}^{\{3\}}_2&=\frac{\p_x^2 (\log w)}{240}-\frac{351}{8 (1-108 w)^3}+\frac{27}{8 (1-108w)^4}+\frac{189}{10 (1-108 w)^5}\,,
		\label{triF2form1}\\
		\mathcal{F}^{\{3\}}_3&=-\frac{\p_x^4 (\log w)}{6048}+\frac{589761}{4 (1-108 w)^6}-\frac{8203437}{28 (1-108 w)^7}
		-\frac{448335}{2 (1-108 w)^8}+\frac{324405}{(1-108 w)^9}\nn\\ & \quad\quad +\frac{178605}{(1-108 w)^{10}}\,.\label{triF3form1}
\end{align}

\begin{table}[phbt]
\centering
\begin{tabular}{|l|m{2.2cm}|m{2.5cm}|m{2.7cm}|m{3cm}|m{3.7cm}|}
\hline
${\tiny d}$ & $g=0$ & $g=1$ & $g=2$ & $g=3$ & $g=4$  \\ 
\hline
$1$ & $0$ & $3$& $3061800$ & 357485480352000 & 561734730904309522 560000  \\
\hline
$2$ & $0$ & $4536$ & $89414357760$ & 475379823378087 93600 & 208281465835272806 019563520000 \\
\hline
$3$ & $12$ & $19362240$ & 2834113460935 680 & 514591710352541 8098278400 & 547188956214674466 69373094461440000\\
\hline
$4$ & $5184$ & $164367221760$ & 1107578328829 37856000 & 565109847632479 817270034432000 & 131217479838294406
811733692434863882 24000 \\
\hline
$5$  & $9797760$  & 233201956829 1840 &  5405486118155 731877068800 & 672926687318093 573568845764362 24000
& 313985184119369209 780057086172883719 8151680000 \\
\hline
\end{tabular}
\caption{The numbers $n_{g}(3^{4g-4+2d})$ for $g=0,\dots,4$ and $d=1,\dots,5$.} 
\label{tritable}
\end{table}

\medskip

{\bf The case $b=4$.}
According to~\cite{BIZ,DuY2,ELT}, 
$w=x+12x^2+288x^3+\cdots$ is the unique solution to
\beq\label{quadwgenus0eqn}
x =w  -  12 \, w^2\,.
\eeq
Explicitly, $w=\frac{1-\sqrt{1-48x}}{24}$.
\begin{theorem}\label{quadthm1}
For $g=0$, we have $\frac{\p^2 \F^{\{4\}}_0(x)}{\p x^2}=\log w$. For $g=1$,
\beq
\mathcal{F}^{\{4\}}_1(x) =  -\frac{1}{12}\log w-\frac{1}{12} \log (1-24w)+ \zeta'(-1).
\eeq
 For $g\geq 2$, $\F^{\{4\}}_g(x)$ has the expression: 
\beq\label{quadFform}
\F^{\{4\}}_g(x)=\frac{1-2g}{(2g)!}\,B_{2g}\,\p_x^{2g-2}(\log w) +\sum_{\ell=4g-4}^{5g-5}\frac{\tilde{a}_{g,\ell}}{(1-24w)^{\ell}}\,,
\eeq
where $\tilde{a}_{g,4g-4},\dots,\tilde{a}_{g,5g-5}$ are rational numbers. 
\end{theorem}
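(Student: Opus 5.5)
We will prove Theorem~\ref{quadthm1} along the same lines as Theorem~\ref{trithm1}, using the Toda lattice structure of $Z(x,\bs;\e)$ restricted to $\bs=\mathbf 1_4$. Write $V(x)=\F^{\{4\}}(x+\e)+\F^{\{4\}}(x-\e)-2\F^{\{4\}}(x)$, so that $e^{V}$ is the Toda variable $\mathcal W(x)$. Let $u=\p_x^2\F_0^{\{4\}}$ and $w=e^u$. Since $\mathbf 1_4$ is even, the relevant flow is the $s_4$-flow. The first step is the genus-zero statement. Restricting the dispersionless Toda flows (equivalently the string equation) to $\bs=\mathbf 1_4$ gives the classical relation $x=w-12w^2$, recorded above as~\eqref{quadwgenus0eqn}. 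From the definition of $u$ this yields $\p_x^2\F_0^{\{4\}}=\log w$ at once. The relation also gives $w_x=1/(1-24w)$, which will be the basic differentiation rule throughout.

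Next we write the all-genera ansatz $\mathcal W=w\bigl(1+\sum_{g\ge1}\e^{2g}W_g\bigr)$. The Toda string equation for the quartic potential, namely $x=\mathcal W\bigl(1-4(\mathcal W(x+\e)+\mathcal W(x)+\mathcal W(x-\e))\bigr)$ up to normalization, is a discrete Painlev\'e~I type recursion. Expanding it in $\e$ determines each $W_g$ recursively. Using $w_x=(1-24w)^{-1}$ and induction, we will show that $W_g$ is a polynomial in $z=(1-24w)^{-1}$ whose degree and lowest order are controlled. We then recover $\F_g$ from $\F(x+\e)+\F(x-\e)-2\F(x)=\log\mathcal W$ by inverting the operator $2(\cosh(\e\p_x)-1)$.

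The $\log w$ part produces exactly the Bernoulli term $\frac{1-2g}{(2g)!}B_{2g}\p_x^{2g-2}\log w$. This comes from the standard expansion of $\bigl(\tfrac{\e\p_x}{2\sinh(\e\p_x/2)}\bigr)^2$ applied to $\p_x^{-2}\log w$, following the method of~\cite{DYZ}. The remaining part lies in $\QQ[z]$ and must be integrated twice in $x$. Since $dx=(1-24w)\,dw$, we have $\p_x=z\,\p_w$, so every derivative of $z$ is again a polynomial in $z$. The double antiderivative is then shown to be a polynomial in $z$ by matching degrees, and the constants of integration are fixed by the normalization $\F_g\to\frac{B_{2g}}{4g(g-1)x^{2g-2}}$ as $x\to0$. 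For $g=1$ the same computation produces logarithms, $-\tfrac1{12}\log w-\tfrac1{12}\log(1-24w)$, and the constant $\zeta'(-1)$ comes from the normalization.

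The main obstacle is the exact range $4g-4\le\ell\le5g-5$. The upper bound $5g-5$ follows from a weight count: each step of the recursion raises the $z$-degree by at most $5/2$ per power of $\e$. The lower bound $4g-4$ is harder, since it requires the low-order terms in $z$ to cancel. To handle it, we plan to show that the $s_4$-flow, or a suitable Virasoro/homogeneity (Euler) operator, acts so that $\F_g-\text{(Bernoulli term)}$ is homogeneous of a definite weight in a second grading. Concretely, we would express $\F_g$ as a polynomial in $z$ and $w$ with $w=(1-z^{-1})/24$, and use the dilaton-type equation for the quartic case to force divisibility by $z^{4g-4}$. We would check both bounds against the first explicit genera before concluding by induction.
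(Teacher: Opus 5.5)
Your skeleton matches the paper's: the quartic string equation \eqref{differencequad}, induction on \eqref{wgrec} in the variable $z=(1-24w)^{-1}$, and inversion of $\Lambda+\Lambda^{-1}-2$ as in \eqref{quadwf}. The genuine gap is the step that fixes the integration constants.

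The double antiderivative leaves an undetermined $\alpha_g x+\beta_g$. Since $x=w(1-12w)=(1-z^{-2})/48$, these are exactly the terms that could fall outside the window $z^{4g-4},\dots,z^{5g-5}$. The behaviour at $x\to0$ cannot remove them, for three reasons.
\begin{itemize}
\item The singular term $\frac{B_{2g}}{4g(g-1)}x^{2-2g}$ is reproduced automatically by $\frac{1-2g}{(2g)!}B_{2g}\p_x^{2g-2}(\log w)$, because $\log w-\log x$ is regular at $x=0$. So this normalization imposes nothing.
\item The vanishing of the $x^0$-coefficient (since $n_g(4^{2g-2})=0$) only gives $\beta_g=-\sum_\ell\tilde a_{g,\ell}-[x^0]\bigl(\frac{1-2g}{(2g)!}B_{2g}\p_x^{2g-2}\log w\bigr)$. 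You have no independent way to see that this is zero.
\item $\alpha_g$ is tied to $[x^1]\F_g=n_g(4^{2g-1})/(2g-1)!$, which you do not know a priori. Already for $g=1$ a term $\alpha_1x$ survives, and it is removed only by the count $n_1(4)=1$ or by \eqref{Feven1general}.
\end{itemize}
The missing input is global information at the other end, $w\to\infty$. The paper uses \eqref{fgfmgequal} (cf.~\cite[Theorem~5]{DuY2}), which writes $\F_g$ through $x$-derivatives of $u=\log w$ with total derivative degree $2g-2$. Together with $\p_x^k(\log w)=O(w^{-2k})$, this gives $\F_g\to0$ as $w\to\infty$. There $x\to\infty$, while both the Bernoulli term and $\sum_\ell\tilde a_{g,\ell}z^\ell$ tend to $0$, so $\alpha_g=\beta_g=0$.

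Your dilaton/Euler-operator suggestion does not supply this. To use it you must pass through the Toda identity for $\e(\Lambda-1)\p_{s_4}\F$. Inverting $\Lambda-1$ reintroduces an undetermined constant, because difference identities cannot see constants.

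Conversely, the lower bound, which you treat as the hard part requiring a cancellation, comes for free. Since $\p_x z^n=24n\,z^{n+2}$, each $\p_x$ raises both the lowest and the highest $z$-degree by exactly $2$. The same induction on \eqref{wgrec} therefore gives $W_g\in{\rm span}\{z^{4g-1},\dots,z^{5g-1}\}$, i.e.\ $W_g=wP_g(w)/(1-24w)^{5g-1}$ with $\deg P_g\le g-1$. It follows that:
\begin{itemize}
\item the $\e^{2g}$-part of $\log(W/w)$ lies in ${\rm span}\{z^{4g},\dots,z^{5g-1}\}$;
\item $\p_x^{2k-2}$ shifts each window by $4k-4$;
\item $\p_x^{-2}$ sends $z^{n+4}$ to $z^n/(576\,n(n+2))$, and $n\ge4g-4>0$.
\end{itemize}
This already produces exactly the range $4g-4\le\ell\le5g-5$, modulo $\alpha_gx+\beta_g$. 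So the whole content of the ``lower bound'' is the vanishing of the integration constants, and checking the first few genera does not establish it.
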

\noindent The proof, which is based on~\cite{AvM,DuY,GMMMO} and uses a method from~\cite{DYZ}, is given in Section~\ref{quadrangulation}.

It is easy to deduce using~\eqref{quadFform} and a result of~\cite{E2} that the number $\tilde{a}_{g,5g-5}$  equals to $\frac{48^g}{576}\frac{C_g}{(5g-3)(5g-5)}$ (this statement can also be proved using just
the Toda lattice theory), where $C_g$ are the universal constants introduced in~\eqref{tricorrasymp}.

\medskip

{\bf The case $b=2\nu$.} 
In this case, the following theorem (using our notations) was originally conjectured in~\cite[Conjecture~5.2]{ELT}
and confirmed recently in the third arXiv version of~\cite{ELT}.

\smallskip 

\noindent {\bf Theorem A} (\cite{ELT})
{\it For $\nu\geq 2$, define $q=q(x)$ as the power-series-in-$x$ solution to the equation
		\beq\label{qeqn}
		1-q+\frac{(2\nu)!}{\nu!(\nu-1)!} \; x^{\nu-1} q^{\nu}=0\,.
		\eeq 
		Then for $g\geq 2$,
		\beq\label{evenFform4}
		x^{2g-2}\F_g^{\{2\nu\}}(x)
		=\sum_{\ell=2g-2}^{5g-5}\frac{r_{g,\ell}(\nu)}{(\nu-(\nu-1)q)^{\ell}}\,,
		\eeq 
		where for each $\nu\geq 2$, $r_{g,2g-2}(\nu),\dots,r_{g,5g-5}(\nu)$ are rational numbers.}

\smallskip 

We mention here that for the case when $b=4$ one can use $g$ parameters
to express the genus $g$ free energy with $g\ge2$ (see~\eqref{quadFform}) instead of $3g-2$ parameters (see~\eqref{evenFform4}).
For the case when $b=6$ we have similar observations in which $2g-1$ parameters are sufficient, whose proof will be given elsewhere.

We will also give several proofs of Theorem~A, which were completed before noticing the 
updated version of~\cite{ELT}.

In Section~\ref{quadrangulation}, 
a proof of Theorem~A will be achieved based on Theorem~\ref{quadthm1}. 
By using a formula of~\cite{E2}, we will give a second proof, which 
also corresponds to the proof in~\cite{ELT}.

In~\cite{DLYZ20, DuY2} a relationship between $n_g(2\nu_1,\dots,2\nu_k)$ and certain cubic Hodge integrals on the Deligne--Mumford 
moduli spaces, called the {\it Hodge--GUE correspondence}, was established.
As an application of the Hodge--GUE correspondence, we will give a third proof of Theorem~A.

Motivated by a study of Gharakhloo--Latimer~\cite{GL}, we give in the following theorem a fine structure 
for $\F_g^{\{2\nu\}}(x)$ as another application of the Hodge--GUE correspondence.

\begin{theorem}\label{rglthm}
For $g\geq 2$, the coefficients $r_{g,\ell}(\nu)$ in~\eqref{evenFform4} are polynomials in $\nu$ of degree $3g-3$.
\end{theorem}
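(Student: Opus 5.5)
The plan is to go through the Hodge--GUE correspondence of~\cite{DLYZ20,DuY2}. It expresses the even-valence numbers $n_g(2\nu_1,\dots,2\nu_k)$ through cubic Hodge integrals
\[
\int_{\overline{\mathcal M}_{g,k}}\Lambda(-1)^2\Lambda(1/2)\,\prod_i\psi_i^{d_i},
\]
with coefficients in which each $\nu_i$ enters through binomials such as $\binom{2\nu_i}{\nu_i}$ and through powers of $\nu_i$. We specialize to $\bs={\bf 1}_{2\nu}$. The genus~$g$ Hodge-side free energy $\mathcal H_g(t_0,t_1,\dots)$ depends on only finitely many jet variables. Concretely, by the standard dilaton/string-reduction argument we can write
\[
\mathcal H_g=\sum \frac{c_{g;\lambda}\,v_{\lambda_1}\cdots v_{\lambda_m}}{(1-v_1)^{2g-2+m}}, \qquad |\lambda|\le 3g-3,
\]
where $v=\p_{t_0}^2\mathcal H_0$, $v_j=\p_{t_0}^j v$, and the $c_{g;\lambda}$ are rational constants independent of $\nu$.

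Under the correspondence, the GUE side restricted to ${\bf 1}_{2\nu}$ corresponds to a specific point of the Hodge times. On the small-phase-space line this point is $t_j=\delta_{j,0}\,x-\delta_{j,1}+\mathrm{const}\cdot\binom{2\nu}{\nu}\,\delta_{j,\nu-1}$, with shifts in $x$ by $\pm\e/2$ handled by the standard operator relating $\F$ and $\mathcal H$. On this line the genus-zero string equation collapses to exactly the algebraic equation~\eqref{qeqn}. After a suitable normalization, $v$ becomes $x\,q$ up to constants, so we can identify $q$ with the genus-zero solution.

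Next we compute the derivatives. Differentiating~\eqref{qeqn} gives
\[
x\,\frac{dq}{dx}=\frac{(\nu-1)\,q(q-1)}{\nu-(\nu-1)q}.
\]
Hence each jet $v_j$ evaluated on the line is a rational function of $q$. Its denominator is a power of $D:=\nu-(\nu-1)q$, its numerator is polynomial in $q$ and $\nu$, and the $\nu$-degree grows by at most one per derivative. The factor $1-v_1$ becomes proportional to $D$. Substituting, $x^{2g-2}\mathcal H_g$ is a rational function of $q$ with poles only at $D=0$. We rewrite the numerator in powers of $D$ using $q=(\nu-D)/(\nu-1)$. Together with the pole-order bounds $2g-2\le\ell\le5g-5$ from Theorem~A, this gives the expansion~\eqref{evenFform4}.

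The main obstacle is to show that each $r_{g,\ell}(\nu)$ is a \emph{polynomial} in $\nu$ rather than merely rational: the substitution $q=(\nu-D)/(\nu-1)$ introduces denominators $(\nu-1)^{k}$. We would resolve this by working instead in the variable $y=(\nu-1)(q-1)$, which satisfies $D=1-y$ and $x\,dy/dx=y(y+\nu-1)/(1-y)$. This recursion has coefficients polynomial in $\nu$, so every $v_j$ is a polynomial in $y$ and $\nu$ divided by a power of $1-y$, and no $(\nu-1)^{-1}$ ever appears. Re-expanding in powers of $1/D$ then gives polynomial coefficients.

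For the degree bound, each of the at most $3g-3$ total derivatives contributes at most one power of $\nu$ (through the factor $\nu-1$ in the recursion). We also need to check that the gluing of the $x\mapsto x\pm\e/2$ shifts and the $\binom{2\nu}{\nu}$ normalizations preserve polynomiality and do not raise the degree. Finally, that the degree is exactly $3g-3$ follows by tracking the top coefficient: it is the contribution of $\lambda=(2^{\,\cdot})$-type terms, whose leading coefficient in $\nu$ is a nonzero multiple of the $\lambda_g$-free Witten--Kontsevich intersection numbers. We would verify this against the explicit genus-two data to confirm it is nonzero in general.
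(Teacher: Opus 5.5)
Your skeleton is the paper's: Hodge--GUE, a $\nu$-independent expression of each Hodge free energy in genus-zero jets, a derivative lemma in $D=\nu-(\nu-1)q$ with coefficients in $\ZZ[\nu]$, then a degree count. However, the Hodge-side input is misidentified, and the computation as you set it up does not go through.

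Restricting the correspondence to $\bs={\bf 1}_{2\nu}$ switches on \emph{every} time, $t_i=\nu^{i+1}\binom{2\nu}{\nu}-1+\delta_{i,1}+x\delta_{i,0}$ for $i\ge0$. With only $t_0,t_1,t_{\nu-1}$ nonzero, the genus-zero equation $v=\sum_j t_jv^j/j!$ involves only $1,v,v^{\nu-1}$. It therefore cannot become \eqref{2nugenus0eqn} with $v$ a multiple of $w=xq$. With the correct times it reads $x=e^{v}-\frac{(2\nu)!}{\nu!(\nu-1)!}e^{\nu v}$, i.e.\ $v=\log w=\log(xq)$.

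This is what makes polynomiality work. The jets to control are $\p_x^k(\log w)$, and $v_1=\p_x\log w=1/(xD)$ is $\nu$-free. By contrast, $\p_x(xq)=q/D=(\nu-D)/((\nu-1)D)$ would carry $(\nu-1)^{-1}$ whether you write things in $y$ or in $D$. Likewise, the jet form of a genus-$m$ free energy has denominators that are powers of $v_1$, not of $1-v_1$. The factor $v_1^{-N}=(xD)^N$ is harmless, while $(1-1/(xD))^{-N}$ is not a Laurent polynomial in $D$ at all.

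Your shape $(1-\cdot)^{-(2g-2+m)}$ is the Itzykson--Zuber formula, which lives in the variables $I_k=\sum_n t_{n+k}v^n/n!$ rather than in jets. There the specialization of the times matters. With the correct times, $1-I_1=xD$ and $I_k=x\bigl(\nu\sum_{j=0}^{k-2}\nu^j-D\sum_{j=0}^{k-1}\nu^j\bigr)$ for $k\ge2$, which would give a legitimate variant. Once $v=\log w$ is in place, your $y$-recursion is just $x\,\p_xD=-(\nu-D)(1-D)/D$. That is what drives Lemma~\ref{evendxlogw}: $\p_x^k(\log w)=x^{-k}\sum_{\ell=0}^{k-1}(-1)^{k-\ell+1}e_{k,\ell}(\nu)D^{-k-\ell}$ with $e_{k,\ell}\in\ZZ[\nu]$ of degree $\le k-1$. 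No change of variable is needed.

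There are two further gaps. First, the degree count is misstated. The total number of derivatives is not bounded by $3g-3$; the genus-two Witten--Kontsevich term $v_2^3/v_1^4$ already has six. What is bounded is $|\lambda|-\ell(\lambda)$, and the $\nu$-degree bound uses $\deg e_{k,\ell}\le k-1$, i.e.\ the first derivative of $\log w$ costs nothing. The bookkeeping you postponed is exactly where $3g-3$ comes from. A genus-$m$ Hodge term of Hodge-class degree $k$ has $|\lambda|-\ell(\lambda)=3m-3-k$. The $\e^{2(g-m)}$ part of $\Lambda^{1/2}+\Lambda^{-1/2}$ applies $\p_x^{2(g-m)}$, each derivative adding one, so you get $2g+m-3-k\le 3g-3$. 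Since $0\le|s|\le|\lambda|-\ell(\lambda)$, the same count also yields the pole range $2g-2\le\ell\le5g-5$, so you need not import Theorem~A.

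Second, a genus-two check says nothing about general $g$. The paper's own argument gives only $\deg r_{g,\ell}\le 3g-3$. Sharpness is seen at $\ell=5g-5$, where only the monomials $e_{k,k-1}=(2k-3)!!\,\nu^{k-1}$ contribute, so $r_{g,5g-5}(\nu)$ is a constant times $\nu^{3g-3}$. By \eqref{rgCg} this constant is $C_g/(12^g(5g-3)(5g-5))$, and the recursion coming from \eqref{painleveI} gives $C_g>0$ for $g\ge1$.
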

\noindent The proof is in Section~\ref{evenangulation}.

Following~\cite{GL}, define $S_{g,k}(\nu)$ by 
\beq\label{evenexplicit}
n_{g}((2\nu)^k)
=:\biggl(\frac{(2\nu)!}{(\nu+1)!\nu!}\biggr)^k S_{g,k}(\nu)\,.
\eeq
By~\eqref{evenFform4}, \eqref{qeqn} we have
\begin{align}\label{Sgjdef}
S_{g,k}(\nu)=k!(\nu(\nu+1))^k \sum_{m=0}^{k} \binom{\nu k-1}{k-m}\frac{(\nu-1)^m}{m!} 
\sum_{\ell=2g-2}^{5g-5}r_{g,\ell}(\nu)(\ell-1)_m\,.
\end{align}
Theorem~\ref{rglthm} then implies a conjectural statement given by Gharakhloo--Latimer in~\cite[Remark~2.9]{GL}: 
{\it for $g\geq 0$ and $k\geq 1$, $S_{g,k}(\nu)$ is a polynomial in $\nu$ of degree $3g-3+3k$}.
A further conjectural statement on distributions of zeros of~$S_{g,k}(\nu)$ was also proposed in~\cite{GL}. We recall that,
when $g=0$, the expressions of $n_0((2\nu)^k)$, $\nu\geq 2$, were given in~\cite{EMP} (the $\nu=2$ case also given in~\cite{CL17});
when $g=1$, the expressions of $n_1((2\nu)^k)$ were obtained in~\cite{CL17} for $\nu=2$ and in~\cite{ELT2} for $\nu\ge2$;
when $g=2$, the expressions of $n_2((2\nu)^k)$, $\nu\geq 2$, $k=1,2,3$, were given in~\cite{GL}; 
the expressions of $n_{g}(4^k)$, $g=2,\dots,7$, $k\geq 1$, were given in~\cite{ELT2}, 
and the expressions of $n_{g}(6^k)$, $g=2,\dots,5$, $k\geq 1$, were given in~\cite{GL}. We also note that, 
the above-proved polynomiality of $S_{g,k}(\nu)$ can alternatively be deduced from the celebrated quasi-polynomiality of~\cite{DM, KLS} (cf.~also~\cite{GMM,Nor}), which 
is deeply related to topological recursion.

\smallskip

\noindent {\bf Organization of the paper} 
In Section~\ref{review} we review earlier works on the free energy $\F(x,\bs;\e)$. In Section~\ref{triangulations} we prove Theorem~\ref{trithm1}. In Section~\ref{quadrangulation} we prove Theorem~\ref{quadthm1}. 
In Section~\ref{evenangulation} we prove Theorem~\ref{rglthm}.

\smallskip

\noindent {\bf Acknowledgements} 
We thank Paul Norbury for pointing out the reference~\cite{EO09}.
The work of E.G-F.~is supported by the Ram\'on y Cajal Fellowship RYC2023-045188-I, funded by MCIN/AEI/10.13039\allowbreak/501100011033 and by the FSE+. She also acknowledges support by the project PID2024-155686NB-I00 of the Spanish Ministry of Science and Innovation, the ANR CarteEtPlus ANR-23-CE48-0018, a Tremplin grant from Sorbonne Universit\'e, a PEPS grant from the CNRS and the ERC Synergy Grant ReNewQuantum.
The work of D.Y. and J.X. is supported by NSFC~12371254 and CAS~YSBR-032.

\section{Review of the GUE free energy $\F(x,\bs;\e)$}\label{review}
In this section we review several properties of the GUE free energy $\F(x,\bs;\e)$.

It is known (cf.~e.g.~\cite{GMMMO,HZ,MMMM,Mehta,Mo}) that the partition function $Z=Z(x,{\bf s};\e)$ satisfies the following {\it Virasoro constraints}:
\beq\label{virasoro}
L_{k}(Z(x,\bs;\e))=0\,,\quad k\geq -1\,,
\eeq
where $L_k$ are linear operators given by
\begin{align}
&L_{-1}=\sum_{j\geq 2}j s_j \frac{\p }{\p s_{j-1}}-\frac{\p }{\p s_1}+\frac{x s_1}{\e^2}\,,\label{string}\\
&L_0=\sum_{j\geq 1}j s_j \frac{\p }{\p s_j}-\frac{\p }{\p s_2}+\frac{x^2}{\e^2}\,,\label{scaling}\\
&L_k=\e^2\sum_{j=1}^{k-1}\frac{\p^2 }{\p s_j \p s_{k-j}}+2x\frac{\p }{\p s_k}+\sum_{j\geq 1}j s_j\frac{\p }{\p s_{j+k}}-\frac{\p }{\p s_{k+2}}\,,\quad k\geq 1\,,
\end{align}
which satisfy the Virasoro commutation relations:
\beq
[L_k,L_\ell]=(k-\ell)L_{k+\ell}\,,\quad k,\ell\geq -1\,.
\eeq
The $k=-1$ equation in~\eqref{virasoro} is also known as the {\it string equation}. 
It also follows from~\eqref{matrixintegral} that $Z$ satisfies the following {\it dilaton equation}:
\beq
\sum_{j\geq1} s_j \, \frac{\p Z}{\p s_j}  + \e \, \frac{\p Z}{\p \e} + x \, \frac{\p Z}{\p x} + \frac1{12} \, Z = \frac12 \, \frac{\p Z}{\p s_2}\,.
\label{dilaton}
\eeq

Denote by $\Lambda=e^{\e \p_x}$ the shift operator. 
It is known (cf.~\cite{AvM,DuY,GMMMO,Mehta}) that 
$(V^{\rm GUE}=V^{\rm GUE}(x,\bs;\e),W^{\rm GUE}=W^{\rm GUE}(x,\bs;\e))$
defined by
\begin{align}\label{defVWintro}
V^{\rm GUE}(x,{\bf s};\e) &= \e (\Lambda-1) \frac{\p \F(x,{\bf s};\e)}{\p s_1}\,,\\
W^{\rm GUE}(x,{\bf s};\e) &=\e^2\frac{\p^2 \F(x,{\bf s};\e)}{\p s_1 \p s_1}
\end{align}
is a solution to the {\it Toda lattice hierarchy}~\cite{Flaschka, Manakov}:
\beq
\e\frac{\p L}{\p s_j}=[A_j, L]\,,\quad j\geq 1\,,
\eeq
where
\beq
L=\Lambda+V+W \Lambda^{-1}\,,\quad A_j\coloneqq(L^j)_{+}\,.
\eeq
Here and below, for a difference operator $P=\sum_{k\in\ZZ} P_k \Lambda^k$, $P_{+}\coloneqq\sum_{k\geq 0} P_k \Lambda^k$ and
${\rm res} \, P\coloneqq P_0$. 
Denote 
\beq
V^{\{b\}}(x,\e)\coloneqq V^{\rm GUE}(x,{\bf 1}_b;\e)\,, \quad W^{\{b\}}(x,\e)\coloneqq W^{\rm GUE}(x,{\bf 1}_b;\e)\,.
\eeq
By definition, we know that $W^{\{b\}}(x,\e)$ has the following genus expansion:
\beq\label{topoWbexpand}
W^{\{b\}}(x,\e) = \sum_{g\geq0} \e^{2g} W^{\{b\}}_{g}(x)\,.  
\eeq

Dividing the $k=-1$ equation in~\eqref{virasoro} by $Z(x,\bs;\e)$, we have 
\beq\label{string2}
\sum_{j\geq 2}j s_j \;\frac{\p \F(x,\bs;\e)}{\p s_{j-1}}+\frac{x s_1}{\e^2}
=\frac{\p \F(x,\bs;\e)}{\p s_1}\,.
\eeq
Applying $\e(\Lambda-1)$ on both sides of~\eqref{string2} yields
\beq\label{string21}
\e\sum_{j\geq2}j s_j (\Lambda-1)\frac{\p \F(x,\bs;\e)}{\p s_{j-1}}+s_1=\e(\Lambda-1)\frac{\p \F(x,\bs;\e)}{\p s_1}\,.
\eeq
Following~\cite[(2.1.4)]{DuY}, define
$
h_j\coloneqq\frac1{j+2}{\rm res}L^{j+2}\in\QQ[V,W,\Lambda^{\pm 1}V,\Lambda^{\pm 1}W,\dots]
$. 
In particular, $h_{-1}=V$. 
From~\cite[(1.2.9) and Corollary~2.2.2]{DuY} we know 
\beq
\e(\Lambda-1)\frac{\p \F(x,\bs;\e)}{\p s_j}=j h_{j-2}\bigl|\bigr._{\Lambda^i V=\Lambda^i V^{\rm GUE},\,\Lambda^i W=\Lambda^i W^{\rm GUE},\, i\in\ZZ} \, ,
\eeq
then~\eqref{string21} becomes
\beq\label{Ptype1}
\sum_{j\geq 2}j(j-1) s_j \; h_{j-3}\bigl|\bigr._{\Lambda^i V=\Lambda^i V^{\rm GUE},\,\Lambda^i W=\Lambda^i W^{\rm GUE},\, i\in\ZZ}
+s_1
=V^{\rm GUE}\,.
\eeq
Applying $\e^2\frac{\p}{\p{s_1}}$ on both sides of~\eqref{string2}, we get 
\beq\label{string22}
\e^2\sum_{j\geq2}j s_j \frac{\p^2 \F(x,\bs;\e)}{\p s_1\p s_{j-1}}+x=\e^2\frac{\p^2 \F(x,\bs;\e)}{\p s_1^2}\,.
\eeq
Then according to~\cite[(1.2.8), (2.1.10)]{DuY} and~\cite[(56)]{Y}, we obtain
\beq\label{Ptype2}
\sum_{j\geq2}j s_{j}(\Lambda+1)^{-1}(j h_{j-2}-(j-1)V h_{j-3}) |_{\Lambda^i V=\Lambda^i V^{\rm GUE},\,\Lambda^i W=\Lambda^i W^{\rm GUE},\, i\in\ZZ}+x
=W^{\rm GUE}\,.
\eeq
We note that special cases of~\eqref{Ptype1} and~\eqref{Ptype2} were given in e.g.~\cite{BD,BK,CL17,fik91}.

Define
${\bf v}^{\rm GUE}(x,{\bf s})=(v^{\rm GUE}(x,{\bf s}),u^{\rm GUE}(x,{\bf s}))\coloneqq(V^{\rm GUE}(x,{\bf s};0), \log W^{\rm GUE}(x,{\bf s};0))$.
Based on the Dubrovin--Zhang theory, the following formulas were obtained in~\cite{Du2} (cf.~also~\cite{ACKM, DW,Eynard,Y2}):

\begin{align}
	\mathcal{F}_0(x,{\bf s}) = &
	\frac12\sum_{p,q\geq0} (p+1)! (q+1)! \bigl(s_{p+1}-\frac12\delta_{p,1}\bigr) \bigl(s_{q+1}-\frac12\delta_{q,1}\bigr) 
	\Omega^{{\mathbb{P}^1},[0]}_{2,p;2,q}(v^{\rm GUE}(x,{\bf s}),u^{\rm GUE}(x,{\bf s}))   \nn\\
	& + x\sum_{p\geq0} (p+1)! \bigl(s_{p+1}-\frac12\delta_{p,1}\bigr) \theta^{\mathbb{P}^1}_{2,p}(v^{\rm GUE}(x,{\bf s}),u^{\rm GUE}(x,{\bf s})) + \frac12 x^2 u^{\rm GUE}(x,{\bf s})\,, \label{F0dub}\\
	\F_g(x,{\bf s}) = 
	&F_g^{\mathbb{P}^1}\biggl(v^{\rm GUE}(x,{\bf s})\,,
	u^{\rm GUE}(x,{\bf s}), \dots,\frac{\p^{3g-2} v^{\rm GUE}(x,{\bf s})}{\p x^{3g-2}},\frac{\p^{3g-2} u^{\rm GUE}(x,{\bf s})}{\p x^{3g-2}}\biggr) \nn\\
	& + \bigl(\zeta'(-1)-\frac1{24} \log(-1)\bigr)\delta_{g,1}\,,\quad g\geq1\,.\label{fgfmgequal}
\end{align}
Here, $\Omega^{{\mathbb{P}^1},[0]}_{2,p;2,q}(v,u)$ and $\theta^{{\mathbb{P}^1}}_{2,p}(v,u)$ 
are certain genus $0$ two-point correlation functions of the
$\mathbb{P}^1$-Frobenius manifold, and $F_g^{\mathbb{P}^1}(v,u,v_1,u_1,\dots,v_{3g-2},u_{3g-2})$ denotes the genus~$g$ free energy in jet variables.
For $g=1$,
\beq\label{jetF1}
F_1^{\mathbb{P}^1}(v,u,v_1,u_1)=\frac1{24}\log(v_1^2-e^{u}u_1^2)-\frac1{24}u\,.
\eeq
For $g\geq 2$, $F_g^{\mathbb{P}^1}(v,u,v_1,u_1,\dots,v_{3g-2},u_{3g-2})$ depends polynomially on $v_2,u_2,\dots,v_{3g-2},u_{3g-2}$ and rationally on $v_1,u_1$ with coefficients being smooth functions of $v$ and $u$. Moreover, 
\beq
\deg F^{\mathbb{P}^1}_{g}(v,u,v_1,u_1,\dots,v_{3g-2},u_{3g-2})=2g-2\,,
\eeq
where $\deg v_k=\deg u_k\coloneqq k$.
The reader is referred to e.g.~\cite{Du96,DuZ,Y2} for details on the $\mathbb{P}^1$-Frobenius manifold, $\theta^{{\mathbb{P}^1}}_{2,p}(v,u)$,
$\Omega^{{\mathbb{P}^1},[0]}_{2,p;2,q}(v,u)$, and $F_g^{\mathbb{P}^1}(v,u,v_1,u_1,\dots,v_{3g-2},u_{3g-2})$. 

Following for example~\cite{W}, consider {\it the even GUE free energy} $\F^{\rm even}$, which is the GUE free energy restricted to even couplings, i.e.,
\beq
\F^{\rm even} = \F^{\rm even}(x,s_2,s_4,\dots;\e)\coloneqq\F(x,\bs;\e)|_{s_1=s_3=\cdots=0}
\eeq
and plays an important role in two-dimensional quantum gravity~\cite{W}.
Denote by 
$\F^{{\rm even}}_g\coloneqq[\e^{2g-2}]\F^{\rm even}$ the genus~$g$ part of $\F^{\rm even}$, and denote
\beq
 v_{\rm even}=v^{\rm GUE}(x,\bs)|_{s_1=s_3=\dots=0}\,,\;
 u_{\rm even}=u^{\rm GUE}(x,\bs)|_{s_1=s_3=\dots=0}\,. 
\eeq
It was proved in~\cite{DuY2} that $v_{\rm even}\equiv 0$ and
$w_{\rm even}\coloneqq e^{u_{\rm even}}$ satisifies
\beq\label{evengenus0eqn}
x=w_{\rm even}-\sum_{\nu\geq 1}\frac{(2\nu)!}{\nu!(\nu-1)!}s_{2\nu} w_{\rm even}^{\nu}\,.
\eeq

The {\it Hodge--GUE correspondence}, which gives an explicit relationship between 
a certain cubic Hodge free energy and $\F^{\rm even}$,
was established in~\cite{DLYZ20,DuY2}.
Let
$\overline{\mathcal{M}}_{g,n}$ denote the Deligne--Mumford moduli space of stable genus-$g$ curves with $n$ marked points. Denote by $\psi_i$ the first Chern class of the $i$th tautological line bundle 
over $\overline{\mathcal{M}}_{g,n}$, and by $\lambda_i$ the $i$th Chern class of the rank-$g$ Hodge bundle
on $\overline{\mathcal{M}}_{g,n}$.

We define the Chern polynomial $\Lambda_g(z)\coloneqq\sum_{i=0}^g \lambda_i z^i$, and set
\beq
\mathcal{H}({\bf t};\e)\coloneqq\sum_{g\geq 0}\e^{2g-2}\sum_{n\geq 0}\frac1{n!}\sum_{i_1,\cdots,i_n\geq 0} \prod_{m=1}^{n}t_{i_m}\int_{\overline{\mathcal{M}}_{g,n}}\Lambda_g(-1)\Lambda_g(-1)\Lambda_g(\tfrac12)
\psi_1^{i_1}\cdots\psi_{n}^{i_n}\,.
\eeq
The Hodge--GUE correspondence says that 
\beq
\F^{\rm even}(x,s_2,s_4,\dots)=
\e^{-2}A(x,s_2,s_4,\dots)+\zeta'(-1)
+(\Lambda^{\frac12}+\Lambda^{-\frac12})\mathcal{H}({\bf t}(x,s_2,s_4,\dots);\e)\,,
\eeq 
where $A=A(x,s_2,s_4,\dots)$ is defined by
\beq
A=\frac12\sum_{k_1,k_2\geq 1}\frac{k_1 k_2}{k_1+k_2}
\binom{2k_1}{k_1}\binom{2k_2}{k_2}s_{2k_1}s_{2k_2}
-\sum_{k\geq 1}\frac{k}{1+k}\binom{2k}{k}s_{2k}+x\sum_{k\geq1}\binom{2k}{k}s_{2k}+\frac14-x\,,
\eeq
and
\beq
t_i(x,s_2,s_4,\dots)\coloneqq\sum_{k\geq 1} k^{i+1}\binom{2k}{k}s_{2k}-1+\delta_{i,1}+x \delta_{i,0}\,,\quad i\geq 0\,.
\eeq
This correspondence yields the only known ELSV-type formula for strictly monotone Hurwitz numbers~\cite{BG17}, to the best of our knowledge.
Explicit expressions for $\F^{{\rm even}}_g$ for $g=1,\dots,5$ 
in terms of $u_{\rm even}$ and its $x$-derivatives were given in~\cite{DuY2} and its arXiv 
preprint version. For example, \cite{DuY2} shows that
\beq\label{Feven1general}
\F^{{\rm even}}_1= \frac1{12} \log  \frac{\p u_{\rm even}}{\p x} +\zeta'(-1).
\eeq

\section{On triangulations}\label{triangulations}
In this section we prove Theorem~\ref{trithm1}.

Taking $\bs={\bf 1}_3$ in~\eqref{Ptype1} and \eqref{Ptype2}, we have
\begin{align}
3 \, \bigl(V^{\{3\}}(x,\e)^2 + W^{\{3\}}(x,\e) + W^{\{3\}}(x+\e,\e)\bigr) & =  V^{\{3\}}(x,\e) \,, \label{eq1}\\ 
x + 3 \, W^{\{3\}}(x,\e) \, \bigl(V^{\{3\}}(x,\e)+V^{\{3\}}(x-\e,\e)\bigr) & = W^{\{3\}}(x,\e)  \label{eq2}
\end{align}
(cf.~also~e.g.~\cite{BD,CL17}). Introduce 
\beq
\widetilde V^{\{3\}}(x,\e) \coloneqq V^{\{3\}}\bigl(x-\frac{\e}2,\e\bigr)\,,
\eeq
then equations~\eqref{eq1}--\eqref{eq2} become
\begin{align}
 3 \,  \Bigl(\widetilde V^{\{3\}}(x,\e)^2 + W^{\{3\}}\bigl(x-\frac\e2,\e\bigr) \, +\, W^{\{3\}}\bigl(x+\frac\e2,\e\bigr)\Bigr) & =  \widetilde V^{\{3\}}(x,\e) \,, \label{eq1t}\\ 
 x + 3  W^{\{3\}}(x,\e)  \Bigl(\widetilde V^{\{3\}}\bigl(x-\frac\e2,\e\bigr)+\widetilde V^{\{3\}}\bigl(x+\frac\e2,\e\bigr)\Bigr) & = W^{\{3\}}(x,\e) \,. \label{eq2t}
\end{align}

From the definition of $V^{\rm GUE}(x,\bs;\e)$, we know that $\widetilde V^{\{3\}}(x,\e)$ has the following genus expansion:
\begin{align}
\widetilde V^{\{3\}}(x,\e) = \sum_{g\geq0} \e^{2g}  \widetilde V^{\{3\}}_g(x). \label{topoVt} 
\end{align}
Substituting~\eqref{topoWbexpand} and~\eqref{topoVt} in equations~\eqref{eq1t}--\eqref{eq2t}, we obtain
\begin{align}
	& 3 \, \sum_{g_1,g_2\geq0} \e^{2(g_1+g_2)} \, \widetilde V^{\{3\}}_{g_1} \, \widetilde V^{\{3\}}_{g_2} + 6\, \sum_{g,k}  \frac{\e^{2g+2k}}{2^{2k} (2k)!} 
	\frac{\p^{2k} W^{\{3\}}_{g}}{\p x^{2k}} =  \sum_{g\geq0} \e^{2g} \, \widetilde V^{\{3\}}_{g}(x) \,, \label{eq1te}\\ 
	& x + 6 \, \sum_{g_1,g_2,k\geq0}   \frac{\e^{2g_1+2g_2+2k}}{2^{2k} (2k)!} W^{\{3\}}_{g_1}(x)  \,
	\frac{\p^{2k} \widetilde V^{\{3\}}_{g_2}}{\p x^{2k}} =  \sum_{g\geq0} \e^{2g} \, W^{\{3\}}_{g}(x) \,. \label{eq2te}
\end{align}
Comparing coefficients of $\e^{2g}$ ($g \geq 0$) on both sides of~\eqref{eq1te}--\eqref{eq2te}, we find
\begin{align}
	3 \, \widetilde V^{\{3\}}_{0}(x)^2 + 6 \, W^{\{3\}}_{0}(x) & = \widetilde V^{\{3\}}_{0}(x) \,, \label{genus01}\\ 
	x + 6 \, W^{\{3\}}_{0}(x)  \, \widetilde V^{\{3\}}_{0}(x) & =  W^{\{3\}}_{0}(x) \,, \label{genus02}
\end{align}
and for $g\geq1$,
\begin{align}
& 
\bigl(1 - 6 \widetilde V^{\{3\}}_{0} \bigr) \, \widetilde V^{\{3\}}_{g} \,-\, 6 \, W^{\{3\}}_{g} 
=
3 \, \sum_{g_1=1}^{g-1} \widetilde V^{\{3\}}_{g_1} \, \widetilde V^{\{3\}}_{g-g_1} + 6 \, \sum_{k=1}^g  \frac{1}{2^{2k}(2k)!} 
\frac{\p^{2k} W^{\{3\}}_{g-k}}{\p x^{2k}}  \,, \label{genusg1}\\ 
& 
- \, 6 \, W^{\{3\}}_{0} \, \widetilde V^{\{3\}}_{g}  + (1- 6 \, \widetilde V^{\{3\}}_{0}) \, W^{\{3\}}_{g} 
=
6 \, \sum_{g_1,g_2\leq g-1 \atop g_1+g_2+k=g}   \frac{1}{2^{2k} (2k)!} W^{\{3\}}_{g_1} \,
\frac{\p^{2k} \widetilde V^{\{3\}}_{g_2}}{\p x^{2k}}  \,. \label{genusg2}
\end{align}

Solving~\eqref{genus01}--\eqref{genus02} we find that $v=v(x)\coloneqq\widetilde V^{\{3\}}_{0}(x)$ and $w=w(x)\coloneqq W^{\{3\}}_{0}(x)$
satisfy~\eqref{triveqn} and~\eqref{triweqn}, which imply~\eqref{triweqn2} as well as
\beq\label{vwithw}
v=\frac{1-\sqrt{1-72w}}{6}\,.
\eeq
From~\eqref{triweqn2} we get
\beq\label{tridw}
\p_x = \frac{\sqrt{1-72\,w}}{1-108\,w} \, \p_w\,.
\eeq
\begin{lemma}\label{lemDx}
For $k\geq 1$,
\beq\label{dxeven}
\p_x^{2k}=\sum_{i=1}^{2k}\frac{T_{k,i}(w)}{(1-108w)^{4k-i}}\p_w^i 
\eeq
for some polynomials $T_{k,1}(w),\dots,T_{k,2k}(w)$. 
Moreover, $\deg\,T_{k,i}(w)\leq k$ for $i=1,\dots,2k$.
\end{lemma}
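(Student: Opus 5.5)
We argue by induction on $k$, using \eqref{tridw} to write $\p_x=f(w)\,\p_w$ with $f=\sqrt{1-72w}/(1-108w)$. The point of working with $\p_x^2$ rather than $\p_x$ is that the square root disappears: $\p_x^2=f^2\p_w^2+ff'\p_w$, where $f^2=(1-72w)/(1-108w)^2$ and $ff'=\frac12(f^2)'$. A one-line computation gives
\beq\nn
\p_x^2=\frac{1-72w}{(1-108w)^2}\,\p_w^2+\frac{72-3888w}{(1-108w)^3}\,\p_w\,.
\eeq
This is the case $k=1$ of \eqref{dxeven}: $T_{1,2}=1-72w$ and $T_{1,1}=72-3888w$. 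Both have degree $\le 1$, and the exponents $4-2$ and $4-1$ are as claimed.

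For the induction step, write $\p_x^{2k+2}=\p_x^2\circ\p_x^{2k}$. Apply the operator $D\coloneqq P_2(1-108w)^{-2}\p_w^2+P_1(1-108w)^{-3}\p_w$, with $\deg P_1,\deg P_2\le1$, to a single term $c\,\p_w^i$, where $c=T(w)(1-108w)^{-(4k-i)}$, $\deg T\le k$ and $1\le i\le 2k$. By Leibniz,
\beq\nn
D(c\,\p_w^i)=\frac{P_2}{(1-108w)^2}\bigl(c''\p_w^i+2c'\p_w^{i+1}+c\,\p_w^{i+2}\bigr)+\frac{P_1}{(1-108w)^3}\bigl(c'\p_w^i+c\,\p_w^{i+1}\bigr).
\eeq
The key observation is that differentiating $c$ raises the pole order by one but keeps the numerator degree $\le k$. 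Indeed, $c'=\bigl(T'(1-108w)+108(4k-i)T\bigr)(1-108w)^{-(4k-i)-1}$, and similarly $c''$ has numerator of degree $\le k$ over $(1-108w)^{4k-i+2}$.

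We then check each of the five terms against the target form for $\p_x^{2k+2}$, namely numerator of degree $\le k+1$ over $(1-108w)^{4(k+1)-j}$ in front of $\p_w^j$:
\begin{itemize}
\item $P_2c''\p_w^i$ has pole order $4k-i+4=4(k+1)-i$;
\item $2P_2c'\p_w^{i+1}$ has pole order $4k-i+3=4(k+1)-(i+1)$;
\item $P_2c\,\p_w^{i+2}$ has pole order $4k-i+2=4(k+1)-(i+2)$;
\item $P_1c'\p_w^i$ has pole order $4k-i+4$;
\item $P_1c\,\p_w^{i+1}$ has pole order $4k-i+3$.
\end{itemize}
In every case the pole order matches the target exactly, and the numerator is a product of a degree-$\le1$ factor with a degree-$\le k$ factor, so it has degree $\le k+1$. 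The indices $j$ that occur lie in $\{1,\dots,2k+2\}$, since $i\ge1$ and $i+2\le 2k+2$. Summing over $i$ gives \eqref{dxeven} for $k+1$ with $\deg T_{k+1,j}\le k+1$.

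There is no real obstacle. The only points needing care are the explicit computation of $\p_x^2$ and the bookkeeping of pole orders. The latter is consistent because each $\p_w$ that falls on a coefficient contributes exactly one extra factor $(1-108w)^{-1}$, while each $\p_w$ passed through to the right lowers the target exponent by one. This is the same balance that already holds in the base case.
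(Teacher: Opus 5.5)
Your proposal is correct and follows the paper's own route. Both compute $\p_x^2=\frac{72-3888w}{(1-108w)^3}\p_w+\frac{1-72w}{(1-108w)^2}\p_w^2$ from the relation $\p_x=\frac{\sqrt{1-72w}}{1-108w}\p_w$, and then induct via $\p_x^{2k+2}=\p_x^2\circ\p_x^{2k}$. The only difference is that you check pole orders and numerator degrees term by term, whereas the paper writes out an explicit recursion for $T_{m+1,i}$ in terms of $T_{m,i}$, $T_{m,i-1}$, $T_{m,i-2}$ and their $w$-derivatives.
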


\begin{proof}
For $k=1$, from~\eqref{tridw} we obtain
\beq\label{tridw2}
\p_x^2
=\frac{72 (1-54 w)}{(1-108 w)^3}\p_w+\frac{1-72 w}{(1-108 w)^2}\p_w^2\,,
\eeq  
thus~\eqref{dxeven} holds. Suppose that~\eqref{dxeven} is true for $k=m$ with $\deg\,T_{m,i}(w)\leq m$. Then for $k=m+1$ we have
\begin{align}
\p_x^{2m+2}&=\p_x^2\circ\p_x^{2m}=
\biggl(\frac{72 (1-54 w)}{(1-108 w)^3}\p_w+\frac{(1-72 w)}{(1-108 w)^2}\p_w^2 \biggr)
\circ \sum_{i=1}^{2m}\frac{T_{m,i}(w)}{(1-108w)^{4m-i}}\p_w^i\nn\\
&=\sum_{i=1}^{2m+2}\frac{T_{m+1,i}(w)}{(1-108w)^{4m+4-i}}\p_w^i\,.
\end{align}
Here
\begin{align}\label{Tm+1}
T_{m+1,i}=&(1-72w)(1-108w)^2 \p_w^2 T_{m,i}+2(1-72w)(1-108w)\p_w T_{m,i-1}+(1-72w)T_{m,i-2}\nn\\
&+\bigl(72+216(4k-i)-(3888-15552(4k-i))w\bigr)(1-108w)\p_w T_{m,i}\nn\\
&+\bigl(72+216(4k+1-i)-(3888-15552(4k+1-i))w\bigr)T_{m,i-1}\nn\\
&+108(4k-i)\bigl(72+108(4k+1-i)-(3888-7776(4k+1-i))w\bigr)T_{m,i}\,,
\end{align}
with $T_{m,-1}$, $T_{m,0}$, $T_{m,2m+1}$ and $T_{m,2m+2}$ defined as~0. 
Obviously,
$\deg\,T_{m+1,i}\leq m+1$ for $i=1,\dots,2m+2$. By mathematical induction the lemma is proved.
\end{proof}

\begin{prop}\label{trisolansatz}
For $g\geq 1$, $\widetilde{V}^{\{3\}}_{g}(x)$, $W^{\{3\}}_{g}(x)$ are given by
\beq\label{ansatzOftriSol}
W^{\{3\}}_{g}(x)=\frac{w Q_g(w)}{(1-108 w)^{5g-1}}\Big|_{w=w(x)}\, ,\quad \widetilde{V}^{\{3\}}_{g}(x)=\frac{\sqrt{1-72w}\, R_g(w)}{(1-108 w)^{5g-1}}\Big|_{w=w(x)} \,,
\eeq
where $Q_g(w), R_g(w)$ are polynomials of~$w$. Moreover, $\deg\,Q_g(w)\leq 2g-1$, $\deg\,R_g(w)\leq 2g-1$.
\end{prop}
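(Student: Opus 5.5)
We argue by strong induction on $g\geq 1$, using the recursive system \eqref{genusg1}--\eqref{genusg2}. Write $v=\widetilde V^{\{3\}}_0=\frac{1-\sqrt{1-72w}}{6}$ as in \eqref{vwithw}, so that $1-6v=\sqrt{1-72w}$ and $W^{\{3\}}_0=w$. Then \eqref{genusg1}--\eqref{genusg2} form a linear system for $(\widetilde V^{\{3\}}_g,W^{\{3\}}_g)$ with matrix
\[
\begin{pmatrix}\sqrt{1-72w} & -6\\ -6w & \sqrt{1-72w}\end{pmatrix},
\]
whose determinant is $1-72w-36w=1-108w$. By Cramer's rule,
\[
\widetilde V^{\{3\}}_g=\frac{\sqrt{1-72w}\,P_1+6P_2}{1-108w},\qquad
W^{\{3\}}_g=\frac{6wP_1+\sqrt{1-72w}\,P_2}{1-108w},
\]
where $P_1,P_2$ are the right-hand sides of \eqref{genusg1} and \eqref{genusg2}. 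Thus one power of $1-108w$ comes from inverting the system. The remaining powers, and the parity in $\sqrt{1-72w}$, must come from $P_1,P_2$.

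To make the parity transparent, we track the class $\mathcal A_{m}=\{\,wQ(w)/(1-108w)^m\,\}$ ("even type") and $\mathcal B_m=\{\sqrt{1-72w}\,R(w)/(1-108w)^m\}$ ("odd type"), with degree bounds on $Q$ and $R$. The key observation is that, by \eqref{tridw}, $\p_x$ maps even type to odd type and vice versa, since it introduces exactly one factor $\sqrt{1-72w}$. Hence $\p_x^{2k}$ preserves type, and Lemma~\ref{lemDx} gives the precise pole and degree bookkeeping. Applying Lemma~\ref{lemDx} to $f=wQ/(1-108w)^{5h-1}$ with $\deg Q\leq 2h-1$: the term $\p_w^i f$ has pole order $5h-1+i$ and numerator degree $\leq 2h$. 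Multiplying by $T_{k,i}/(1-108w)^{4k-i}$ yields pole order $5h-1+4k$ and numerator degree $\leq 2h+k$. For $h=g-k$ this gives pole order $5g-1-k\leq 5g-2$ and degree $\leq 2g-k$. The factor $w$ survives because $\p_w^i$ of $w\cdot(\text{stuff})$ need not be divisible by $w$. However, $\p_x^{2k}W_{h}$ vanishes at $w=0$ ($x=0$) only if... This requires care, so we treat the factor $w$ separately in $W$ (see below). For odd type, $\sqrt{1-72w}R$ is handled the same way, since $\p_w\sqrt{1-72w}=-36/\sqrt{1-72w}$. Writing $\sqrt{1-72w}R=(1-72w)R/\sqrt{1-72w}$, it is cleaner to note that $\p_x^{2k}$ commutes with the involution $\sqrt{1-72w}\mapsto-\sqrt{1-72w}$ applied to functions of $x$ expressed via $w$, so odd type maps to odd type.

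For the products, $\widetilde V_{g_1}\widetilde V_{g-g_1}$ is odd times odd, giving $(1-72w)R_{g_1}R_{g-g_1}/(1-108w)^{5g-2}$. This is a polynomial (even type up to the factor $w$) with degree $\leq 2g-2+1$. Similarly, $W_{g_1}\p_x^{2k}\widetilde V_{g_2}$ is even times odd, hence odd. So $P_1$ is even type with pole $\leq 5g-2$ and $P_2$ is odd type with pole $\leq 5g-2$. Cramer's rule then gives $W_g$ even and $\widetilde V_g$ odd, with pole order $\leq 5g-1$: the product $\sqrt{1-72w}\cdot\sqrt{1-72w}$ in the formula for $W_g$ becomes polynomial, and in the formula for $\widetilde V_g$ the term $6P_2$ is already odd.

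The base case $g=1$ is a direct computation from the same formulas. The inputs there are $W_0=w$, $\p_x^2 w$ via \eqref{tridw2}, and $\p_x^2\widetilde V_0$; in \eqref{genusg2} the sum reduces to $k=1$, i.e.\ $\frac{1}{24}w\,\p_x^2 v$.

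The main obstacle is the exact bookkeeping of degrees, namely showing $\deg Q_g,\deg R_g\leq 2g-1$, together with the divisibility of the numerator of $W_g$ by $w$. Pole counts multiply cleanly, but the degree bound requires the estimate $\deg T_{k,i}\leq k$ from Lemma~\ref{lemDx} combined with each term's pole deficit. A term of pole order $5g-1-k$ can be rewritten over $(1-108w)^{5g-1}$ by multiplying the numerator by $(1-108w)^k$, which raises the degree by $k$. I would therefore first verify that the worst terms (the $k$-sums, whose degree is $\leq 2g-k+k$, and the products) still satisfy numerator degree $\leq 2g$ before the factor $w$ is split off. This needs a careful count, and it may call for an induction hypothesis strengthened to include the degree of $\p_w$-derivatives.

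For the factor $w$ in $W_g$, I would use the combinatorial fact that $W^{\{3\}}(x,\e)=\e^2\p_{s_1}^2\F$ restricted to $\bs={\bf 1}_3$ is divisible by $x$: each map counted has positive $x$-power, by Euler's formula. Since $w=x+O(x^2)$, this translates into divisibility by $w$ and avoids tracking it through the recursion.
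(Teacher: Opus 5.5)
Your strategy is the paper's: induction on $g$, solving the genus-$g$ equations \eqref{genusg1}--\eqref{genusg2} as a $2\times 2$ linear system of determinant $1-108w$. Your Cramer formulas are exactly what the paper's one-line inductive step amounts to, and you handle the parity in $\sqrt{1-72w}$ and the pole count at $w=1/108$ correctly.

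What your write-up does not deliver is the degree bound, which you defer and suspect may need a stronger induction hypothesis. It does not. Once the pole order $\le 5g-1$ is known, the bounds $\deg R_g\le 2g-1$ and $\deg Q_g\le 2g-1$ are equivalent to $\widetilde V_g=O(w^{1/2-3g})$ and $W_g=O(w^{1-3g})$ as $w\to\infty$. By \eqref{tridw}, each $\p_x$ lowers the order at $w=\infty$ by at least $3/2$. So every term of $P_1$ is $O(w^{1-3g})$, including $\p_x^{2g}w$. Every term $W_{g_1}\p_x^{2k}\widetilde V_{g_2}$ of $P_2$ is $O(w^{3/2-3g})$, including $g_1=0$ (where $W_0=w$) and $g_2=0$ (where $v=O(w^{1/2})$). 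Your Cramer formulas then give exactly $\widetilde V_g=O(w^{1/2-3g})$ and $W_g=O(w^{1-3g})$.

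In your own numerator language: $\p_x^{2k}W_{g-k}$ has numerator degree $\le 2g-k$ over $(1-108w)^{5g-1-k}$, i.e.\ degree $\le 2g-1$ over $(1-108w)^{5g-2}$. The products $\widetilde V_{g_1}\widetilde V_{g-g_1}$ in $P_1$ and all terms of $P_2$ (after removing the factor $\sqrt{1-72w}$) satisfy the same bound. So the count closes with the hypothesis as stated.

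The factor $w$ in $W_g$ needs no combinatorial input either; your Euler-characteristic argument is valid, just unnecessary. In $W_g=(6wP_1+\sqrt{1-72w}\,P_2)/(1-108w)$ the first term carries $w$ explicitly. Each term of $P_2$ contains a factor $W_{g_1}$ with $0\le g_1\le g-1$, which is divisible by $w$, times a factor regular at $w=0$. The terms $\p_x^{2k}W_h$ you were worried about enter only $P_1$, where no divisibility is required.

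Finally, in the base case the coefficient in \eqref{genusg2} is $6\cdot\frac{1}{2^2\cdot 2!}=\frac34$, not $\frac1{24}$. With $\frac34$ one recovers the paper's $\widetilde V_1=54\sqrt{1-72w}/(1-108w)^4$ and $W_1=162(5-324w)w/(1-108w)^4$.
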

\begin{proof}
For $g=1$, solving the $g=1$ equations of \eqref{genusg1}--\eqref{genusg2} gives
\beq
\widetilde V^{\{3\}}_{1}(x) = \frac{54 \sqrt{1-72 w}}{(1-108 w)^4}\Bigl|_{w=w(x)}\Bigr.\,,\quad
W^{\{3\}}_{1}(x) = \frac{162 \,(5-324 w)\, w}{(1-108 w)^4}\Bigl|_{w=w(x)}\Bigr. \,.
\eeq
Suppose~\eqref{ansatzOftriSol} is true with $\deg\,Q_g(w),\deg\,R_g(w)\leq 2g-1$ for $g\leq m$. 
Then for $g=m+1$, 
by solving the $g=m+1$ equations of~\eqref{genusg1}--\eqref{genusg2} and using~\eqref{vwithw}, Lemma~\ref{lemDx},
we obtain
\begin{align}
(\widetilde{V}^{\{3\}}_{m+1}(x),W^{\{3\}}_{m+1}(x))
=\Bigl(\frac{\sqrt{1-72w}\,R_{m+1}(w)}{(1-108w)^{5m+4}},
\frac{w\,Q_{m+1}(w)}{(1-108w)^{5m+4}}\Bigr)\Bigl|_{w=w(x)}\Bigr.\,,
\end{align}
where $Q_{m+1}(w)$, $R_{m+1}(w)$ are polynomials of $w$ with
$\deg\,Q_{m+1}(w),\deg\,R_{m+1}(w)\leq 2m+1$. This completes the proof of the proposition by mathematical induction.
\end{proof}

From Proposition~\ref{trisolansatz} we know that $\widetilde{V}^{\{3\}}(x,\e)$, $W^{\{3\}}(x,\e)$ can be written as
\beq\label{trigenusinpart1}
W^{\{3\}}(x,\e)=w+\sum_{g\geq 1}\e^{2g}\sum_{\ell=3g-1}^{5g-1}\frac{A_{g,\ell}}{(1-108w)^{\ell}}\,,
\quad \widetilde{V}^{\{3\}}(x,\e)=v+\sum_{g\geq 1}\e^{2g}\sum_{\ell=-\infty}^{5g-1}\frac{\widetilde{B}_{g,\ell}}{(1-108w)^{\ell}}\,,
\eeq
for some real numbers $A_{g,5g-1},\ldots,A_{g,3g-1},\widetilde{B}_{g,5g-1},\ldots$. 
Substituting~\eqref{trigenusinpart1} in~\eqref{eq1t}--\eqref{eq2t}, one can obtain that
$
U=2^{\frac{9}5}3^{\frac{11}5}\sum_{g\geq 0}A_{g,5g-1}(2^{\frac{2}5}3^{\frac85}X)^{\frac{1-5g}{2}}
$
satisfies the Painlev\'e I equation~\eqref{painleveI}. Therefore,
\beq\label{univ_const}
A_{g,5g-1}=\frac{162^g}{108}C_g\,,
\eeq
with $C_g$ the constants introduced in~\eqref{tricorrasymp}.

We proceed to calculate $\F^{\{3\}}(x,\e)$. Taking $\bs={\bf 1}_3$ in~\eqref{defVWintro} we have
\beq
\log W^{\{3\}}(x,\e) = \F^{\{3\}}(x+\e,\e) \,+\, \F^{\{3\}}(x-\e,\e) \,-\, 2 \, \F^{\{3\}}(x,\e) \,. \label{WF}
\eeq
Therefore, 
\begin{align}\label{wf}
\F^{\{3\}}(x,\e) & = \frac{1}{\e^2 \p_x^2} \, \frac{\e^2 \p_x^2}{e^{\e \p_x} + e^{-\e \p_x} -2} \bigl( \log W^{\{3\}}(x,\e) \bigr)\nn \\
&= \frac{1}{\e^2 \p_x^2} \, \sum_{k\geq 0} \frac{(1-2k) B_{2k}}{(2k)!} \, \e^{2k} \p_x^{2k} \bigl( \log W^{\{3\}}(x,\e) \bigr) \,.
\end{align}

It is helpful to introduce the following lemma for computing~\eqref{wf}:
\begin{lemma}\label{lem1}
The following identity holds for $k\geq 4$:
\begin{align}\label{lem1gen}
\frac{972(-2)^k}{(1-108w)^k}\frac{((k-2)!)^2}{(2k-4)!}=
\p_x^{2}\biggl(72w & \sqrt{3-216w} \; {\rm arctanh}\bigl(\sqrt{3-216w}\bigr)\nn\\
&-72w+\sum_{i=1}^{k-4}\frac{1}{(1-108w)^i}\frac{(-2)^{i+2}}{i(i+2)}\frac{(i+2)!^2}{(2i+4)!}\biggr)\,.
\end{align}
\end{lemma}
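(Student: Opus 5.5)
The plan is induction on $k$, with \eqref{tridw2} as the only tool. Write
$$f_k(w)\coloneqq 72w\sqrt{3-216w}\,{\rm arctanh}\bigl(\sqrt{3-216w}\bigr)-72w+\sum_{i=1}^{k-4}\frac{1}{(1-108w)^i}\frac{(-2)^{i+2}}{i(i+2)}\frac{(i+2)!^2}{(2i+4)!}$$
for the bracket on the right of \eqref{lem1gen}, and $L_k\coloneqq \frac{972(-2)^k}{(1-108w)^k}\frac{((k-2)!)^2}{(2k-4)!}$ for the left-hand side. The claim is $\p_x^2 f_k=L_k$ for all $k\geq 4$.

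\textbf{Step 1: the basic computation.} Set $z=1-108w$, so that $1-54w=\frac{1+z}{2}$ and $1-72w=\frac{1+2z}{3}$. Using $\p_w z^{-j}=108j\,z^{-j-1}$ and $\p_w^2 z^{-j}=108^2j(j+1)z^{-j-2}$ in \eqref{tridw2}, we get
$$\p_x^2 z^{-j}=108j\Bigl(36(1+z)+36(j+1)(1+2z)\Bigr)z^{-j-4}=108\cdot 36\, j\Bigl((j+2)z^{-j-4}+(2j+3)z^{-j-3}\Bigr).$$
So $\p_x^2$ maps $z^{-j}$ into the span of $z^{-(j+4)}$ and $z^{-(j+3)}$ only.

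\textbf{Step 2: the inductive step.} We have $f_{k+1}-f_k=c_{k-3}\,z^{-(k-3)}$, where
$$c_i=\frac{(-2)^{i+2}(i+2)!^2}{i(i+2)(2i+4)!}.$$
By Step 1 with $j=k-3$,
$$\p_x^2(f_{k+1}-f_k)=3888\,(k-3)\,c_{k-3}\bigl((k-1)z^{-(k+1)}+(2k-3)z^{-k}\bigr).$$
We must show this equals $L_{k+1}-L_k$. This reduces to two scalar identities among factorial ratios.
\begin{itemize}
\item The $z^{-(k+1)}$ coefficient: $3888\,(k-3)(k-1)c_{k-3}=972(-2)^{k+1}\frac{((k-1)!)^2}{(2k-2)!}$. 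Since $(k-3)(k-1)c_{k-3}=(-2)^{k-1}\frac{((k-1)!)^2}{(2k-2)!}$ and $3888=972\cdot 4$, this holds.
\item The $z^{-k}$ coefficient: $3888\,(k-3)(2k-3)c_{k-3}=-972(-2)^k\frac{((k-2)!)^2}{(2k-4)!}$. This uses $\frac{((k-1)!)^2}{(2k-2)!}=\frac{(k-1)^2}{(2k-2)(2k-3)}\frac{((k-2)!)^2}{(2k-4)!}=\frac{k-1}{2(2k-3)}\frac{((k-2)!)^2}{(2k-4)!}$, and one checks the signs and powers of $2$ agree.
\end{itemize}
Hence $\p_x^2 f_k=L_k$ implies $\p_x^2 f_{k+1}=L_{k+1}$.

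\textbf{Step 3: the base case $k=4$, the expected main obstacle.} The sum is empty, so we must show
$$\p_x^2\bigl(72w\sqrt{3-216w}\,{\rm arctanh}\sqrt{3-216w}-72w\bigr)=\frac{2592}{z^4}.$$
I would apply \eqref{tridw2} directly, with $s=\sqrt{3-216w}$ as the auxiliary variable. The facts needed are:
\begin{itemize}
\item $3-216w=1+2z$;
\item $\frac{d}{dw}{\rm arctanh}(s)=\frac{s'}{1-s^2}=\frac{-108/s}{-2+216w}=\frac{54}{s\,z}$.
\end{itemize}
The main check is that the ${\rm arctanh}$ terms cancel. The function $F=72ws\,{\rm arctanh}(s)$ should be annihilated by the operator $\frac{72(1-54w)}{z^3}\p_w+\frac{1-72w}{z^2}\p_w^2$ on its transcendental part. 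This is plausible because ${\rm arctanh}(s)$ times $w s$ should be a homogeneous solution of $\p_x^2$, i.e.\ linear in $x$ up to algebraic corrections; equivalently, $x=z\sqrt{1-72w}\cdot(\text{algebraic})$ relates to $ws$ via \eqref{triweqn2}.

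I would first verify this cancellation symbolically. Then I would collect the remaining algebraic terms, which involve powers of $z$ and $s^{\pm1}$ with $s^2=3(1-72w)$, and confirm that they sum to $2592/z^4$. The $-72w$ term is included precisely to cancel a leftover term of lower order in $1/z$. If the direct approach becomes messy, a fallback is to compare Taylor expansions in $w$ to sufficiently high order, combined with an ODE-uniqueness argument. This works because both sides satisfy a first-order linear ODE in $w$ after the transcendental part is eliminated.
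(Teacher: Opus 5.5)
Your argument is correct, and it takes a genuinely different route from the paper's. The paper proves all $k\ge4$ at once. It multiplies the left-hand side by $a^k$ and sums it in closed form using the arcsin generating function for $((k-2)!)^2/(2k-4)!$. It then checks once, via \eqref{tridw2}, that this closed form equals $\partial_x^2$ of $\frac{a^4}{1-a}$ times an explicit arctanh/arcsin expression. Finally it extracts the coefficient of $a^k$ from the Taylor expansion of $\frac23\sqrt{1-t/2}\,\frac{t+1}{t}\,\frac{\arcsin\sqrt{t/2}}{\sqrt{t/2}}$. That packaging explains where the arctanh term and the coefficients $\frac{(-2)^{i+2}(i+2)!^2}{i(i+2)(2i+4)!}$ come from. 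The price is two special-function series identities and a fairly heavy verification. Your telescoping induction needs none of this. Since $\partial_x^2 z^{-j}=3888\,j\bigl((j+2)z^{-j-4}+(2j+3)z^{-j-3}\bigr)$, which I confirmed, the step $k\to k+1$ reduces to the two factorial identities you state, and both check out. Transcendental functions then enter only in the single case $k=4$.

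That base case, however, you leave as a plan, and your heuristic for the cancellation is not quite right. The product $ws\,{\rm arctanh}(s)$ is not a homogeneous solution of $\partial_x^2$; the factor $ws$ is. Indeed, \eqref{triweqn2} gives $x=w\sqrt{1-72w}$, hence $ws=\sqrt3\,x$ exactly. With this observation no Taylor/ODE fallback is needed. With your $z=1-108w$ and $s=\sqrt{3-216w}$:
\begin{align*}
\partial_x\,{\rm arctanh}(s)&=\frac{\sqrt{1-72w}}{z}\cdot\frac{54}{s\,z}=\frac{18\sqrt3}{z^{2}}\,,\\
\partial_x^2\bigl(72\sqrt3\,x\,{\rm arctanh}(s)\bigr)&=72\sqrt3\Bigl(\frac{36\sqrt3}{z^2}+x\,\partial_x\frac{18\sqrt3}{z^2}\Bigr)=2592\,\bigl(z^{-2}+z^{-3}+z^{-4}\bigr)\,.
\end{align*}
The last step uses $x\sqrt{1-72w}=w(1-72w)=(1-z)(1+2z)/324$. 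By \eqref{tridw2}, $\partial_x^2(-72w)=-72\cdot36\,(1+z)z^{-3}=-2592\,(z^{-2}+z^{-3})$. The sum is $2592\,z^{-4}$, which is the left-hand side of \eqref{lem1gen} at $k=4$, so your proof is complete.
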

\begin{proof}
Let $a$ be a formal variable. We have
\beq
972\sum_{k\geq 4}\frac{((k-2)!)^2}{(2k-4)!}\frac{(-2a)^k}{(1-108w)^k}=
3888\, s^2\biggl(\frac{s(s-1)}{2-s}+\frac{2\sqrt{s}\, {\rm arcsin}\bigl(\sqrt{s/2}\bigr)}{(2-s)^{\frac32}}\biggr)\,,
\eeq
where $s=\frac{a}{108w-1}$. Using~\eqref{tridw2} it is easy to verify that
\begin{align}\label{lemstep1}
&3888s^2\biggl(\frac{s(s-1)}{2-s}+\frac{2\sqrt{s}\, 
{\rm arcsin}\bigl(\sqrt{s/2}\bigr)}{(2-s)^{\frac32}}\biggr)\nn\\
&=\p_x^2\Biggl(\frac{a^4}{1-a}\biggl(72w\sqrt{3-216w}\,{\rm arctanh}\,\bigl(\sqrt{3-216w}\bigr)\nn\\
& \;\;\;\;\;\;\;\;\;\;\;\;\;\;\;\;-72w
+\frac23\sqrt{1-\frac{s}2}\,\frac{s+1}{s}\,\frac{{\rm arcsin}\bigl(\sqrt{s/2}\bigr)}{\sqrt{s/2}}
-\frac2{3s}\biggr)\Biggr)\nn\\
&=\p_x^2\Biggl(\frac{a^4}{1-a}\biggl(72w\sqrt{3-216w}\,{\rm arctanh}\bigl(\sqrt{3-216w}\bigr)
\nn\\
& \;\;\;\;\;\;\;\;\;\;\;\;\;\;\;\;-72w
+\sum_{i\geq 0} \frac{a^i}{(1-108w)^i}\frac{(-2)^{i+2}}{i(i+2)}\frac{(i+2)!^2}{(2i+4)!}
\biggr)\Biggr)\,.
\end{align}
In this last equality we use the identity
\beq
\frac23\sqrt{1-\frac{s}2}\,\frac{s+1}{s}\,\frac{{\rm arcsin}\bigl(\sqrt{s/2}\bigr)}{\sqrt{s/2}}=\frac2{3s}+
\frac59+\sum_{i\geq 1}\frac{(-2)^{i+2}}{i(i+2)}\frac{(i+2)!^2}{(2i+4)!} s^i\,.
\eeq
By taking the coefficient of $a^k$, $k\geq 4$, on both sides of~\eqref{lemstep1} we obtain~\eqref{lem1gen}.
\end{proof}

Using Proposition~\ref{trisolansatz}, we have
\begin{align}\label{triWstep2}
\log \, W^{\{3\}}(x,\e)
&=\log \, w \,+\, \log \, \biggl(1\,+\,\sum_{g\geq1}  \, \frac{\e^{2g} Q_g(w)}{(1-108w)^{5g-1}}\biggr) \,. 
\end{align}
By substituting~\eqref{triWstep2} in~\eqref{wf}, and using Lemma~\ref{lemDx} and Lemma~\ref{lem1},
we obtain for $g\geq 2$,  
\begin{align}
	\F^{\{3\}}_g =& \frac{1-2g}{(2g)!}B_{2g}\p_x^{2g-2}(\log w) \nn\\
	&+\sum_{k=1}^{g-1} \frac{1-2k}{(2k)!}B_{2k}\p_x^{2k-2}
	\biggl(\sum_{\ell=1}^{g-k}\frac{(-1)^{\ell-1}}{\ell (1-108w)^{5g-5k-\ell}}
	\sum_{m_1,\dots,m_{\ell}\geq 1\atop m_1+\cdots+m_{\ell}=g-k}\prod_{i=1}^{\ell}Q_{m_i}(w)\biggr)\nn\\
	&+\p_x^{-2}
	\biggl(\sum_{\ell=1}^{g}\frac{(-1)^{\ell-1}}{\ell (1-108w)^{5g-\ell}}
	\sum_{m_1,\dots,m_{\ell}\geq 1\atop m_1+\cdots+m_{\ell}=g}\prod_{i=1}^{\ell}Q_{m_i}(w)\biggr)\label{FgAnsatz1} \\
	=&\frac{1-2g}{(2g)!}B_{2g}\p_x^{2g-2}(\log w) +\sum_{\ell=3g-3}^{5g-5}\frac{a_{g,\ell}}{(1-108w)^{\ell}}
	+\alpha_g w\sqrt{1-72w}+\beta_g\nn\\
	&+\gamma_g \cdot \biggl(72w \sqrt{3-216w}\, {\rm arctanh}(\sqrt{3-216w})\nn\\
	&-72z+\sum_{i=1}^{3g-4}\frac{1}{(1-108w)^i}\frac{(-2)^{i+2}}{i(i+2)}\frac{(i+2)!^2}{(2i+4)!}\biggr)\,.\label{FgAnsatz2}
\end{align}
Here $\alpha_g$ and $\beta_g$ are integration constants, and $a_{g,3g-3},\dots, a_{g,5g-5},\gamma_g$ are rational numbers. 

For further simplification of~\eqref{FgAnsatz2}, introduce  
\beq\label{Gdef}
\mathcal{G}^{\{3\}}(x,\e) \coloneqq 6 \, \Bigl(\frac14 \,-\, \frac{x^2}{\e^2} + 3 \, (\e \p_\e + x \p_x) \, (\mathcal{F}^{\{3\}}(x,\epsilon))\Bigr)\,.
\eeq
It has the genus expansion $\mathcal{G}^{\{3\}}(x,\e)=\sum_{g\geq 0}\e^{2g-2}\mathcal{G}^{\{3\}}_g(x)$,
and for $g\geq 0$,
\beq\label{GandF}
\mathcal{G}^{\{3\}}_g=-6x^2 \delta_{g,0}+\frac{3}{2}\delta_{g,1}+18(2g-2)\F^{\{3\}}_g+18x\p_{x}(\F^{\{3\}}_g)\,.
\eeq

By taking $\bs={\bf 1}_3$ in equations~\eqref{string}, \eqref{scaling} and~\eqref{dilaton}, we find 
$\mathcal{G}^{\{3\}}(x,\e) =  \frac{\p Z}{\p s_1}$,
so from~\eqref{defVWintro} we know
\beq
\frac1\e \, \widetilde V^{\{3\}}(x,\e) = \mathcal{G}^{\{3\}}(x+\e/2,\e) - \mathcal{G}^{\{3\}}(x-\e/2,\e) \,. \label{VF}
\eeq
Therefore
\beq
\mathcal{G}^{\{3\}}(x,\e) 
= \frac1{\e^2\p_x}  \frac{\e\p_x} {(\Lambda^{1/2}-\Lambda^{-1/2})} (\widetilde V^{\{3\}}(x,\e)) 
= \p_x^{-1}  \sum_{g\geq0} \e^{2g-2} \sum_{k+g_1=g} \frac{(-1)^k B_{2k}(\frac12)}{(2k)!}\p_x^{2k} (\widetilde V^{\{3\}}_{g_1} (x))\,, \label{triGstructure}
\eeq
and
\beq\label{triGg}
\mathcal{G}_g^{\{3\}}(x,\e)=\p_x^{-1}(\widetilde{V}^{\{3\}}_{g}(x))+\sum_{k=1}^g \frac{(-1)^k B_{2k}(\frac12)}{(2k)!}\p_x^{2k-1} (\widetilde{V}^{\{3\}}_{g-k} (x))\,,\quad g\geq 0\,.
\eeq
Here $B_k(p)$ denotes the $k$th Bernoulli polynomial.

\begin{prop}\label{Ggprop}
For $g\geq2$, $\mathcal{G}^{\{3\}}_g(x)$ is given by
\beq\label{triGgstructure}
\mathcal{G}^{\{3\}}_g(x)= \frac{S_g(w)}{(1-108w)^{5g-3}}\Bigl|_{w=w(x)}\Bigr.+\tilde{\lambda}_g\,,
\eeq 
for a polynomial $S_g(w)$ and some integration constant $\tilde{\lambda}_g$.
Moreover, $\deg\, S_g(w)\leq 2g-1$.
\end{prop}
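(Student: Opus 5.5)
The plan is to start from \eqref{triGg}, which expresses $\mathcal{G}^{\{3\}}_g$ through $\widetilde V^{\{3\}}_0=v,\widetilde V^{\{3\}}_1,\dots,\widetilde V^{\{3\}}_g$, whose shape is known from Proposition~\ref{trisolansatz} and \eqref{vwithw}. I would treat separately the terms with $k\ge1$, which involve only genuine $x$-derivatives, and the single antiderivative term $\p_x^{-1}(\widetilde V^{\{3\}}_g)$. Throughout, all computations are done in the variable $w$ via \eqref{tridw}, namely $\p_x=\frac{\sqrt{1-72w}}{1-108w}\p_w$ and $dx=\frac{1-108w}{\sqrt{1-72w}}\,dw$.

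\emph{Derivative terms.} For $1\le k\le g-1$, I write $\p_x^{2k-1}=\p_x^{2k-2}\circ\p_x$. Applying $\p_x$ to $\sqrt{1-72w}\,R_{g-k}(w)/(1-108w)^{5(g-k)-1}$ removes the square root, since $\sqrt{1-72w}\,\p_w\sqrt{1-72w}$ is a constant. This gives a rational function $P(w)/(1-108w)^{5(g-k)+1}$ with $\deg P\le 2(g-k)$.

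Next I apply Lemma~\ref{lemDx} with $2k-2$ in place of $2k$. Each term $T_{k-1,i}(1-108w)^{-(4k-4-i)}\p_w^i$ raises the pole order by at most $4k-4$ and the numerator degree by at most $k-1$, the latter because $\p_w^i$ lowers the numerator degree by at most $i$. Hence the total pole order is at most $5g-k-3\le 5g-3$. After bringing everything to the common denominator $(1-108w)^{5g-3}$, the numerator degree is at most $2(g-k)+(k-1)+(k-2)\le 2g-1$. I expect this bound to hold after bookkeeping of the form $\deg(\text{numerator})-\text{pole order}\le$ const, and I would check it carefully, possibly refining via the explicit recursion \eqref{Tm+1}.

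The $k=g$ term uses $\p_x v=6/(1-108w)$, which follows from \eqref{vwithw} and \eqref{tridw}, and is then handled the same way.

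\emph{Antiderivative term.} Here
\[
\p_x^{-1}\widetilde V^{\{3\}}_g=\int \frac{R_g(w)}{(1-108w)^{5g-2}}\,dw .
\]
Since $\deg R_g\le 2g-1<5g-3$ for $g\ge2$, a partial fraction expansion shows this equals a rational function $S(w)/(1-108w)^{5g-3}$ with $\deg S\le 5g-4$, plus $c_g\log(1-108w)$, plus a constant. Combining $(1-108w)^{-j}$ terms gives the right denominator, but two issues remain: showing $c_g=0$, and sharpening the numerator degree to $2g-1$.

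\emph{Main obstacle.} The main obstacle is excluding the logarithm and controlling the low pole orders. For this I would use \eqref{GandF}, $\mathcal{G}^{\{3\}}_g=18(2g-2)\F^{\{3\}}_g+18x\p_x\F^{\{3\}}_g$, together with the expression \eqref{FgAnsatz2} for $\F^{\{3\}}_g$. Applying $18\bigl((2g-2)+x\p_x\bigr)$ to \eqref{FgAnsatz2} and using $x=w\sqrt{1-72w}$ (from \eqref{triweqn2}) shows that $\mathcal{G}^{\{3\}}_g$ is a combination of the following functions:
\begin{itemize}
\item rational functions of $w$;
\item $w\sqrt{1-72w}$-type algebraic functions;
\item $w\sqrt{3-216w}\,{\rm arctanh}\sqrt{3-216w}$ and its $x\p_x$-derivative;
\item constants.
\end{itemize}
None of these contains $\log(1-108w)$. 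These functions are linearly independent over rational functions in $w$ and $\sqrt{1-72w}$, so comparing with the previous paragraph forces $c_g=0$.

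Comparing the two expressions further kills the arctanh and $w\sqrt{1-72w}$ contributions, so $\gamma_g$ and $\alpha_g$ are fixed by this. For the degree bound I would argue as follows:
\begin{itemize}
\item Through \eqref{GandF}, $\mathcal{G}^{\{3\}}_g$ inherits the lower bound on pole order from \eqref{FgAnsatz2}, with $\ell\ge 3g-3$ there.
\item Together with the behaviour as $w\to\infty$, this gives the bound $\deg S_g\le (5g-3)-(3g-2)=2g-1$.
\end{itemize}
If that comparison turns out to be circular, the fallback is to compute the residue of $R_g/(1-108w)^{5g-2}$ at $w=1/108$ inductively from \eqref{genusg1}--\eqref{genusg2}.
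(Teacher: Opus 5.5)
Your treatment of the $k\ge1$ terms in \eqref{triGg} is right, apart from one slip. Going from pole order $5g-k-3$ to $5g-3$ multiplies the numerator by $(1-108w)^k$, so the count is $2(g-k)+(k-1)+k=2g-1$, not $2g-3$.

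The gap is in the antiderivative term, and the obstacle you identify there does not exist. Put $u=1-108w$ and expand $R_g=\sum_{j=0}^{2g-1}r_ju^j$, using $\deg R_g\le 2g-1$. Then $\p_x^{-1}\widetilde V^{\{3\}}_g=\p_w^{-1}\bigl(\sum_j r_ju^{j-5g+2}\bigr)$, and every exponent lies in $[-(5g-2),-(3g-1)]$. A logarithm needs the exponent $-1$, i.e.\ $j=5g-3>2g-1$, so $c_g=0$ automatically. Termwise integration then gives $\sum_j\frac{r_j}{108(5g-3-j)}u^{j-5g+3}$ plus a constant, which is $S/u^{5g-3}$ with $\deg S\le 2g-1$.

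That is the whole of the paper's proof: substitute Proposition~\ref{trisolansatz} and Lemma~\ref{lemDx} into \eqref{triGg}, using $\p_x^{-1}=\p_w^{-1}\circ\frac{1-108w}{\sqrt{1-72w}}$. Your own inequality $2g-1<5g-3$ is exactly what kills the logarithm, but you used it only to say the fraction is proper. Likewise, the residue in your fallback is, up to a factor, the coefficient of $u^{5g-3}$ in $R_g$. It vanishes by degree, with no induction needed.

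The detour you use instead does not establish the degree bound as written.
\begin{itemize}
\item $\mathcal{G}^{\{3\}}_g$ does not simply inherit pole orders from \eqref{FgAnsatz2}. The term $\p_x^{2g-2}(\log w)$ has poles at $w=0$ and low-order polar parts at $u=0$, which only reorganize after applying $18\bigl((2g-2)+x\p_x\bigr)$.
\item The operator $x\p_x=\frac{w(1-72w)}{1-108w}\p_w$ sends $u^{-\ell}$ to $\frac{\ell}{3}\bigl(u^{-\ell-2}+u^{-\ell-1}-2u^{-\ell}\bigr)$. So $\ell\ge 3g-3$ yields order $\ge 3g-2$ only after you check the cancellation $(2g-2)-\frac23(3g-3)=0$.
\item $\deg S_g\le 2g-1$ is equivalent to $\mathcal{G}^{\{3\}}_g-\tilde\lambda_g=O(w^{-(3g-2)})$ as $w\to\infty$. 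Appealing to ``the behaviour as $w\to\infty$'' without deriving that rate assumes the conclusion.
\item ``None of these contains $\log(1-108w)$'' is inaccurate. With $y=\sqrt{3-216w}$ one has $(1+y)(1-y)=-2(1-108w)$, so ${\rm arctanh}\,y$ is itself logarithmically singular at $w=1/108$. Separating $c_g$ from $\gamma_g$ then requires the algebraic independence of $\log(1-108w)$ and ${\rm arctanh}\,y$ over $\mathbb{C}(w,\sqrt{1-72w})$.
\end{itemize}
These points can be repaired, and the route is not circular. But it imports \eqref{FgAnsatz2} and large-$w$ information about $\F^{\{3\}}_g$, whereas the paper uses this proposition as the input that forces $\gamma_g=0$. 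It is far heavier than the one-line degree count above.
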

\begin{proof}
From~\eqref{triweqn2} we know that $\p_x^{-1}(*)=\p_w^{-1}\bigl(\frac{1-108 w}{\sqrt{1-72 w}}\,*\bigr)$.
Using this and Proposition~\ref{trisolansatz}, Lemma~\ref{lemDx} in~\eqref{triGg}, we find expression~\eqref{triGgstructure} and that $\deg\, S_g(w)\leq 2g-1$.
\end{proof}

We are ready to prove Theorem~\ref{trithm1}.
\begin{proof}[Proof of Theorem~\ref{trithm1}]
The $g=0$ case has been given above (cf.~also~\eqref{F0dub} and \cite{DuY}).
For $g=1$, the statement can be verified directly using~\eqref{genusg1}--\eqref{genusg2} (cf.~also~\eqref{fgfmgequal}--\eqref{jetF1} and \cite{DuY}).
For $g\ge2$, by comparing~\eqref{triFform1} and~\eqref{FgAnsatz2}, it suffices to prove that the constants
$\alpha_g, \beta_g, \gamma_g$ appearing in~\eqref{FgAnsatz2} all vanish.
From~\eqref{tridw2} we know that $\p_x^{2g-2}(\log w)$ is a rational function of $w$.
Putting~\eqref{FgAnsatz2} in~\eqref{GandF} and comparing with~\eqref{triGgstructure} in
Proposition~\ref{Ggprop}, by
the vanishing of ${\rm arctanh}\,(\sqrt{3-216w})$ in $\mathcal{G}^{\{3\}}_g$ we find $\gamma_g=0$. 
Since $\p_x (\log w)=\frac{\sqrt{1-72 w}}{(1-108 w) w}$, using~\eqref{vwithw}--\eqref{tridw} we find that for $k\geq 1$,
\beq
\p_x^k (v) = O(w^{\frac12-\frac32 k})\,, \quad 
\p_x^k (\log w) = O(w^{-\frac32 k})\,, \quad w \to \infty\,.
\eeq
Then from~\eqref{fgfmgequal} (cf.~\cite{YZ}) we know that 
$\F^{\{3\}}_g \to 0$ as $w\to\infty$. Thus $\alpha_g=\beta_g=0$.
\end{proof}

We end this section by presenting several consequences of Theorem~\ref{trithm1}.
First of all, note that, using Lemmas~\ref{lemDx}, \ref{lem1}, \eqref{trigenusinpart1}, \eqref{FgAnsatz1}, \eqref{FgAnsatz2} and Theorem~\ref{trithm1}, we can achieve a new self-contained proof of the expression~\eqref{tritopcoef} of $a_{g,5g-5}$,
which in turn implies~\eqref{tricorrasymp}.

Multiplying both sides of~\eqref{triFform1} by 
$dx/x^{3-2g+j}$ and taking the residue at $x=0$, we obtain 
\begin{align}
	n_{g}(3^{2j})&=\sigma_{g,j}+72^{j-2g+2}(2j)!\sum_{m=0}^{j-2g+2} 
	\binom{\frac{3j}2-3g+3-m}{\frac{j}2-g+1}
\sum_{\ell=3g-3}^{5g-5}(\ell-1)_{m}\,\frac{3^m}{2^m m!}a_{g,\ell}\,,
\end{align}
where 
$
\sigma_{g,j}=(2j)!\mathop{\rm res}\limits_{w=0}\frac{(1-108w)dw}{w^{j-2g+3}(1-72w)^{\frac{j}2-g+2}}\frac{(1-2g)B_{2g}}{2g!}\p_x^{2g-2}(\log w)
$
and $(a)_m\coloneqq a(a+1)\cdots (a+m-1)$ is the Pochhammer symbol.

Following~\cite{E} (cf.~\cite{Eynard}), introduce
\beq\label{triqchange}
q=\frac1{1-72w}\,.
\eeq
Formulas \eqref{triF1form1333},~\eqref{triF2form1} are then translated to
\begin{align}
	\F^{\{3\}}_1 =-\frac1{12}&\log x-\frac1{24} \log \biggl(\frac{3-q}2\biggr)+\zeta'(-1)\, ,\label{triF12inq}\\
	\F^{\{3\}}_2 =\frac1{x^2}\biggl(&\frac{3}{64 \left(3-q\right)}-\frac{29}{32 \left(3-q\right)^2}+\frac{191}{48 \left(3-q\right)^3}-\frac{55}{8 \left(3-q\right)^4}+\frac{21}{5 \left(3-q\right)^5}\,\biggr)\,,\label{triF2inq}\\
	\F^{\{3\}}_3  =\frac1{x^4}\biggl(&\frac{63}{256 \left(3-q\right)^2}-\frac{22765}{1152 \left(3-q\right)^3}+\frac{7925}{24 \left(3-q\right)^4}-\frac{39311}{16 \left(3-q\right)^5}\nn\\
	&+\frac{1443995}{144 \left(3-q\right)^6}
	-\frac{4055053}{168\left(3-q\right)^7}+\frac{68625}{2\left(3-q\right)^8}-\frac{26730}{\left(3-q\right)^9}+\frac{8820}{\left(3-q\right)^{10}}\biggr) \,. \label{triF3inq}
\end{align}
Formula~\eqref{triF2inq} was given in~\cite{E2} (cf.~\cite{Eynard}), and~\eqref{triF3inq} was given in~\cite{Eynard}.
It was proved by Eynard~\cite{Eynard} (using a slightly different notation) that
for $g\geq2$,
\beq\label{triFform4}
x^{2g-2}\F^{\{3\}}_g(x)
=\sum_{k=g-1}^{5g-5}\frac{r_{g,k}}{(3-q)^{k}}\,,
\eeq
where $r_{g,g-1},\dots,r_{g,5g-5}$ are rational numbers.
Noting that 
\beq\label{pxlogform3}
x^{2k} \p_x^{2k} (\log w)=\sum_{\ell=0}^{3k-1}\frac{H_{k,\ell}}{(3-q)^{k+\ell}}\, , \quad  k\ge1\,,\, H_{k,0},\dots,H_{k,3k-1}\in\QQ\,,
\eeq
and using Theorem~\ref{trithm1} and formula~\eqref{triweqn2},
we can achieve a new proof of~\eqref{triFform4}. 
Multiplying~\eqref{triFform4} by $dx/x^{j+1}$ and taking the residue at $x=0$, we obtain

\begin{align}
	n_{g}(3^{2j})
	=72^j (2j)! \sum_{m=0}^j\sum_{\ell=g-1}^{5g-5} \frac{(\ell-1)_m}{2^{m+\ell} m!} \binom{\frac{3j}2-1}{j-m} r_{g,\ell}\,.\label{triformula3}
\end{align}

Another direct consequence of Theorem~\ref{trithm1} is that 
for $g\geq2$, the genus $g$ free energy $\F^{\{3\}}_g(x)$ can be written in the form
\beq\label{triFform2}
\F^{\{3\}}_g(x)=\sum_{k=1}^{5g-5}\frac{b_{g,k}}{(1-108w)^{k}}
+\sum_{\ell=1}^{2g-2}\frac{b'_{g,k}}{w^{k}}\,,
\eeq
where $b_{g,1},\dots,b_{g,5g-5},b'_{g,1},\dots,b'_{g,2g-2}$ are rational numbers.

Finally, introduce $p=p(x)$ by
\beq\label{tripchange}
p=\frac{108w}{1-108w}\,.
\eeq
\begin{cor}\label{trithm3}
For $g=1$, 
\beq\label{triF1inq}
\F^{\{3\}}_1 =-\frac1{12}\log x+\frac1{24} \log \biggl(\frac{p+3}3\biggr)+\zeta'(-1)\,.
\eeq
For $g\geq 2$, $\F^{\{3\}}_g(x)$ admits an expression of the form
	\beq\label{triFform3}
	x^{2g-2}\F^{\{3\}}_g(x)=\frac{B_{2g}}{4g(g-1)}
	+\sum_{m=1}^{5g-5} c_{g,m} \, p^m\,,
	\eeq
where $c_{g,1}, \dots, c_{g,5g-5}$ are rational numbers; moreover, $c_{g,1}=\cdots=c_{g,2g-2}=0$.
\end{cor}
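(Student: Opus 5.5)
The plan is to rewrite everything in terms of $p$, using the closed forms already established: formula~\eqref{triF1form1333} for $g=1$ and Theorem~\ref{trithm1} for $g\ge2$. From~\eqref{tripchange} we have
\[
\frac1{1-108w}=1+p,\qquad w=\frac{p}{108(1+p)},\qquad 1-72w=\frac{3+p}{3(1+p)}.
\]
By~\eqref{triweqn2}, $x^2=w^2(1-72w)$, so
\[
x^2=\frac{p^2(3+p)}{3\cdot 108^2(1+p)^3}.
\]
Since $p=108x+O(x^2)$, $p$ and $x$ are mutually invertible power series near $0$. Logarithmic differentiation of the identity for $x^2$ gives
\[
2\,\frac{dx}{x}=\Bigl(\frac2p+\frac1{3+p}-\frac3{1+p}\Bigr)dp=\frac{6\,dp}{p(3+p)(1+p)}.
\]
Hence the Euler operator $\theta\coloneqq x\p_x$ acts as $\theta=\frac13p(1+p)(3+p)\,\p_p$. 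This is the key structural fact.

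\textbf{Genus one.} We have $\log w=\log x-\tfrac12\log(1-72w)=\log x+\tfrac12\log\frac{3(1+p)}{3+p}$ and $\log(1-108w)=-\log(1+p)$. Substituting these into~\eqref{triF1form1333}, the $\log(1+p)$ terms cancel and we get~\eqref{triF1inq} directly.

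\textbf{Genus $g\ge2$: the non-Bernoulli part.} Treat the two parts of~\eqref{triFform1} separately. Each term $a_{g,\ell}(1-108w)^{-\ell}$ with $3g-3\le\ell\le5g-5$, multiplied by $x^{2g-2}$, becomes a constant multiple of
\[
p^{2g-2}(3+p)^{g-1}(1+p)^{\ell-3g+3}.
\]
Because $\ell\ge3g-3$, this is a polynomial in $p$ of degree $\ell\le5g-5$, divisible by $p^{2g-2}$.

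\textbf{Genus $g\ge2$: the Bernoulli part.} Use $x^{k}\p_x^{k}=\theta(\theta-1)\cdots(\theta-k+1)$ with $k=2g-2$. First,
\[
\theta\log w=1+\theta\Bigl(\tfrac12\log\tfrac{1+p}{3+p}\Bigr)=1+\frac13p(1+p)(3+p)\cdot\frac{1}{(1+p)(3+p)}=1+\frac p3,
\]
which is a polynomial. Since $\theta$ maps polynomials in $p$ to polynomials, raising the degree by at most $2$ and killing constants, it follows that $x^{2g-2}\p_x^{2g-2}\log w=(\theta-1)\cdots(\theta-2g+3)(1+p/3)$ is a polynomial of degree at most $4g-5\le5g-5$. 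Its constant term is $(-1)^{2g-3}(2g-3)!=-(2g-3)!$. Multiplying by $\frac{(1-2g)B_{2g}}{(2g)!}$ gives the constant term
\[
\frac{(2g-1)(2g-3)!}{(2g)!}B_{2g}=\frac{B_{2g}}{4g(g-1)}.
\]
This establishes~\eqref{triFform3}, with $c_{g,m}$ rational.

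\textbf{Vanishing of $c_{g,1},\dots,c_{g,2g-2}$.} This is the only step needing a combinatorial input, and I expect it to be the main obstacle. By~\eqref{Fgexplicit}, $x^{2g-2}\F^{\{3\}}_g-\frac{B_{2g}}{4g(g-1)}=\sum_k \frac{n_g(3^k)}{k!}x^{k/2}$. The Euler-characteristic constraint $2-2g-k+\frac{3k}2\ge1$ forces $k/2\ge2g-1$, so this difference is $O(x^{2g-1})$. Since $x=p/108+O(p^2)$ is an invertible power series in $p$, the difference is also $O(p^{2g-1})$. Comparing with $\sum_{m\ge1}c_{g,m}p^m$ gives $c_{g,m}=0$ for $1\le m\le 2g-2$.
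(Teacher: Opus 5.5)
Your proof is correct and follows the same route as the paper. Like the paper, you rewrite Theorem~\ref{trithm1} in the variable $p$ using \eqref{triweqn2}, show that $x^{2g-2}\p_x^{2g-2}(\log w)$ is a polynomial in $p$ with constant term $-(2g-3)!$, and deduce $c_{g,1}=\cdots=c_{g,2g-2}=0$ from $x^{2g-2}\F^{\{3\}}_g-\frac{B_{2g}}{4g(g-1)}\in x^{2g-1}\QQ[[x]]$. Your explicit formula $x\p_x=\frac13 p(1+p)(3+p)\p_p$ supplies the computation the paper only asserts; your statement is coarser than the paper's $x^k\p_x^k(\log w)=-(k-1)!+\sum_{\ell=0}^{k-1}G_{k,\ell}p^{k+\ell}$, but it suffices, because in both arguments the low coefficients vanish by the Euler-characteristic bound.
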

\begin{proof}
By using~\eqref{tridw} and~\eqref{tripchange}, one can prove 
\beq\label{trlogdxInp}
x^k \p_x^{k} (\log w) =-(k-1)!+\sum_{\ell=0}^{k-1}G_{k,\ell} p^{k+\ell}
\,,\quad k\ge1\,,~ G_{k,0},\dots,G_{k,k-1}\in \QQ\,.
\eeq
Then by using Theorem~\ref{trithm1} and~\eqref{triweqn2},
we obtain \eqref{triF1inq}, \eqref{triFform3}.
For $g\ge2$, recalling $x^{2g-2}\F^{\{3\}}_g-\frac{B_{2g}}{4g(g-1)}\in x^{2g-1}\QQ[[x]]$ and 
noticing $p\in x\QQ[[x]]$, we find $c_{g,1}=\cdots=c_{g,2g-2}=0$.
\end{proof}
Explicitly, 
\begin{align}
&\F^{\{3\}}_2 =\frac1{x^2}\biggl(-\frac{1}{240}+\frac{7 p^5}{12960}+\frac{29 p^4}{10368}+\frac{35 p^3}{10368}\,\biggr)\,, \nn \\
&\F^{\{3\}}_3  =\frac1{x^4}\biggl(\frac{1}{1008}+\frac{245 p^{10}}{1679616}+\frac{965 p^9}{559872}+\frac{2945 p^8}{373248}+\frac{813587 p^7}{47029248}+\frac{29969 p^6}{1679616}+\frac{5005 p^5}{746496}\biggr) \,.  \nn
\end{align}
Multiplying~\eqref{triFform3} by $dx/x^{j+1}$ and taking the residue at $x=0$, we obtain for $g\ge2$
\begin{align}
	n_{g}(3^{2j})=108^j (2j)!\sum_{\ell=2g-1}^{j} c_{g,\ell}\sum_{m=0}^{j-\ell}
	\frac{(\frac{3j}2-m)_m }{m!}\frac{(\frac{j}{2}+1)_{j-\ell-m}}{(j-\ell-m)!}(-3)^{-j+\ell+m}
	\label{triformula2}
\end{align}
($c_{g,\ell}$ are defined as~0 if $\ell>5g-5$).

\section{On quadrangulations and $(2\nu)$-angulations}\label{quadrangulation}
In this section we prove Theorem~\ref{quadthm1}.

Taking $\bs={\bf 1}_4$ in~\eqref{Ptype1} and \eqref{Ptype2}, we have
\begin{align}
&V^{\{4\}}(x,\e)\equiv 0 \,,\\
& 4 \, W^{\{4\}}(x,\e)(W^{\{4\}}(x,\e)+W^{\{4\}}(x+\e,\e)+W^{\{4\}}(x-\e,\e)) \,-\, W^{\{4\}}(x,\e) + x = 0 \label{differencequad}
\end{align}
(cf.~also~e.g.~\cite{BK, CL17, fik91}).
Substituting~\eqref{topoWbexpand} in~\eqref{differencequad}, we find $w=W^{\{4\}}_0(x)$ satisfies~\eqref{quadwgenus0eqn},
and for $g\geq1$, 
\beq\label{wgrec}
W^{\{4\}}_{g} = \frac4{1-24 w} \Biggl(\sum_{g_2=1}^{g-1} W^{\{4\}}_{g_2} W^{\{4\}}_{g-g_2}+2 \sum_{0\leq g_1,g_2\leq g-1, \, j\geq0 \atop g_1+g_2+j=g} \frac1{(2j)!} W^{\{4\}}_{g_2} \p_x^{2j}(W^{\{4\}}_{g_1})\Biggr)\,.
\eeq
From equation~\eqref{quadwgenus0eqn} we get 
\beq\label{quaddw}
\p_x=\frac1{1-24 w}\p_{w}\,.
\eeq
\begin{lemma}\label{quadDxlem}
For $k\geq 1$,
\beq\label{quaddxk}
\p_x^k = \sum_{i=1}^k \frac{\widetilde{T}_{k,i}}{(1-24w)^{2k-i}}\p_{w}^i\,,
\eeq
where $\widetilde{T}_{k,1},\dots,\widetilde{T}_{k,k}$ are rational numbers.
\end{lemma}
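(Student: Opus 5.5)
We argue by induction on $k$, as in Lemma~\ref{lemDx}; here the situation is simpler because \eqref{quaddw} expresses $\p_x$ via the single factor $(1-24w)^{-1}$ with no square root, so the coefficients turn out to be constants rather than polynomials in $w$. For $k=1$, \eqref{quaddw} gives exactly \eqref{quaddxk} with $\widetilde T_{1,1}=1$, since $2k-i=1$.

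Assume \eqref{quaddxk} holds for $k=m$ with rational constants $\widetilde T_{m,i}$. We write $\p_x^{m+1}=\frac{1}{1-24w}\p_w\circ\p_x^m$ and apply the Leibniz rule to each term $\frac{\widetilde T_{m,i}}{(1-24w)^{2m-i}}\p_w^i$. This produces two contributions. Differentiating the coefficient gives
\[
\frac{24(2m-i)\,\widetilde T_{m,i}}{(1-24w)^{2m-i+2}}\p_w^i=\frac{24(2m-i)\,\widetilde T_{m,i}}{(1-24w)^{2(m+1)-i}}\p_w^i .
\]
Letting $\p_w$ fall on the operator gives
\[
\frac{\widetilde T_{m,i}}{(1-24w)^{2m-i+1}}\p_w^{i+1}=\frac{\widetilde T_{m,i}}{(1-24w)^{2(m+1)-(i+1)}}\p_w^{i+1}.
\]

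Both contributions therefore already have the required shape. Collecting terms, we get \eqref{quaddxk} for $k=m+1$ with the recursion
\[
\widetilde T_{m+1,i}=24(2m-i)\,\widetilde T_{m,i}+\widetilde T_{m,i-1},\qquad 1\le i\le m+1,
\]
using the conventions $\widetilde T_{m,0}=\widetilde T_{m,m+1}=0$. These coefficients are visibly rational (indeed integers), which closes the induction.

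There is no real obstacle. The only point to check is that the exponent bookkeeping $2k-i$ is preserved by both contributions, and this is exactly what the two displayed computations verify.
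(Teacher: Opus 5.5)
Your proposal is correct and follows essentially the same route as the paper: induction on $k$ via $\p_x^{m+1}=\frac{1}{1-24w}\p_w\circ\p_x^m$, which yields the same recursion $\widetilde T_{m+1,i}=24(2m-i)\,\widetilde T_{m,i}+\widetilde T_{m,i-1}$ with the same boundary conventions. Your version correctly writes $\widetilde T_{m+1,i}$ on the left-hand side, whereas the paper's displayed recursion has $\widetilde T_{m,i}$ there, which is a typo.
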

\begin{proof}
For $k=1$, formula~\eqref{quaddxk} is given by~\eqref{quaddw}. 
Suppose that~\eqref{quaddxk} is true for $k=m$ with $\widetilde{T}_{k,i}$ being rational numbers. 
Then for $k=m+1$, we have
\beq
\p_x^{m+1}=\p_x\circ\p_x^{m}=
\frac1{1-24 w}\p_{w} \circ \sum_{i=1}^{m}\frac{\widetilde{T}_{m,i}}{(1-24w)^{2m-i}}\p_{w}^i\nn\\
=\sum_{i=1}^{m+1}\frac{\widetilde{T}_{m+1,i}}{(1-24w)^{2m+2-i}}\p_{w}^i\,.
\eeq
Here
\beq
\widetilde{T}_{m,i}=24(2m-i)\widetilde{T}_{m,i}+\widetilde{T}_{m,i-1}\,,
\eeq
where $\widetilde{T}_{m,0}$ and $\widetilde{T}_{m,m+1}$ are defined as 0.
Obviously $\widetilde{T}_{m,i}$ are rational numbers.
By mathematical induction the lemma is proved.
\end{proof}
\begin{prop}\label{quadWprop}
For $g\geq 1$, $W^{\{4\}}_{g}(x)$ is given by
\beq\label{quadansatzOfSol}
W^{\{4\}}_{g}(x)=\frac{w\,P_g(w)}{(1-24w)^{5g-1}}\Bigl|_{w=w(x)}\Bigr.\,,
\eeq
for some polynomial $P_g(z)$. Moreover, $\deg P_g(w)\leq g-1$.
\end{prop}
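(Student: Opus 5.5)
The plan is an induction on $g$ that uses the recursion~\eqref{wgrec} together with Lemma~\ref{quadDxlem}. It is convenient to set $f\coloneqq 1-24w$, so that $w=(1-f)/24$. Since $w$ and $f$ are affinely related, a polynomial in $w$ of degree $\le d$ is the same as a polynomial in $f$ of degree $\le d$. Hence the claim of Proposition~\ref{quadWprop} is equivalent to two statements: first, that $W^{\{4\}}_g$ lies in the $\QQ$-span of $f^{-\ell}$ with $4g-1\le \ell\le 5g-1$; and second, that $W^{\{4\}}_g$ vanishes at $w=0$, i.e.\ is divisible by $w$. Indeed, $\sum_{\ell=4g-1}^{5g-1}c_\ell f^{-\ell}$ equals a polynomial of degree $\le g$ in $w$ divided by $f^{5g-1}$, and divisibility of that numerator by $w$ leaves $wP_g$ with $\deg P_g\le g-1$. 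I will therefore prove this ``window'' statement by induction.

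The key computational input is that \eqref{quaddw} gives $\p_x f^{-\ell}=f^{-1}\p_w f^{-\ell}=24\ell\, f^{-\ell-2}$. So $\p_x$ maps the span of $f^{-\ell}$, $\ell\in[a,b]$, into the span with $\ell\in[a+2,b+2]$, and $\p_x^{2j}$ shifts the window by $4j$. This is essentially the content of Lemma~\ref{quadDxlem}. In particular, $\p_x w=f^{-1}$, so $\p_x^{2j}w$ is a multiple of $f^{-(4j-1)}$ for $j\ge1$. The base case $g=1$ follows from \eqref{wgrec}: only the term $j=1$, $g_1=g_2=0$ survives, giving
\[
W^{\{4\}}_1=\frac{4}{f}\, w\,\p_x^2 w=\frac{96\,w}{(1-24w)^4},
\]
which matches the claim with $P_1=96$.

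For the inductive step I will treat each term of \eqref{wgrec}, multiplied by the prefactor $4/f$, separately, tracking the window of exponents and the factor $w$.
\begin{itemize}
\item[(i)] For the products $W_{g_2}W_{g-g_2}$ with $1\le g_2\le g-1$, the windows add to $[4g-2,5g-2]$. The factor $1/f$ shifts this to $[4g-1,5g-1]$, and the term is even divisible by $w^2$.
\item[(ii)] For $g_1=0$, $g_2=g-j$: when $j<g$, $W_{g_2}\p_x^{2j}w$ has window $[4g-2,\,5g-j-2]$. When $j=g$ the term is $w\,\p_x^{2g}w$, which is explicitly divisible by $w$.
\item[(iii)] For $g_1,g_2\ge1$, the window is $[4g-2,\,5g-j-2]$.
\item[(iv)] For $g_2=0$, $g_1=g-j\ge1$, $j\ge1$: the window of $\p_x^{2j}W_{g_1}$ is $[4g-1,\,5g-j-1]$. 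Multiplying by $w=(1-f)/24$ extends it to $[4g-2,\,5g-j-1]$, and the factor $1/f$ then gives $[4g-1,\,5g-j]\subset[4g-1,5g-1]$ because $j\ge1$.
\end{itemize}
In every case, after the factor $1/f$ the exponents lie in $[4g-1,5g-1]$ and the term carries a factor $w$, since $W_{g_i}$ does for $g_i\ge1$ and $w$ appears explicitly otherwise. This completes the induction.

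The main obstacle is the bookkeeping in case (iv). There, multiplication by $w$ lowers the bottom of the window, and the argument needs both the inductive lower bound $4g_1-1$ and the condition $j\ge1$ to keep the top exponent at most $5g-1$. Divisibility by $w$ must also be confirmed in each case separately, since the exponent window alone only gives a numerator of degree $\le g$ rather than the form $w\cdot P_g$.
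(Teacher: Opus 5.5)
Your proof is correct and takes the same route as the paper: induction on $g$ via the recursion \eqref{wgrec}, with the action of $\p_x$ on powers of $f=1-24w$ coming from Lemma~\ref{quadDxlem} (your identity $\p_x f^{-\ell}=24\ell f^{-\ell-2}$ is the form of it actually needed). Your window-and-divisibility bookkeeping makes explicit the bound $\deg P_g\le g-1$, which the paper asserts without detail; the one sub-case you leave implicit, the term $w\,\p_x^{2g}w$, has exponent window $[4g-1,4g]$ after the factor $1/f$, so it also lies in $[4g-1,5g-1]$.
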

\begin{proof}
For $g=1$, formula~\eqref{wgrec} gives
\beq
W^{\{4\}}_{1}(x)=4\frac{W^{\{4\}}_{0}(x)\,\p_x^2 (W^{\{4\}}_{0}(x))}{1-24W^{\{4\}}_{0}(x)}=\frac{96\,w}{(1-24 w)^4}\Big|_{w=w(x)}\,,
\eeq
thus~\eqref{quadansatzOfSol} holds. Suppose that~\eqref{quadansatzOfSol} holds for $g\leq m$
with $\deg P_g(w)\leq g-1$. Then for $g=m+1$, 
by using Lemma~\ref{quadDxlem} in the $g=m+1$ equation of~\eqref{wgrec}, we obtain
\beq
W^{\{4\}}_{m+1}(x)=\frac{w\,P_{m+1}(w)}{(1-24w)^{5m+4}}\Bigl|_{w=w(x)}\Bigr.\,,
\eeq
where $P_{m+1}(w)$ is a polynomial of $w$ with $\deg P_{m+1}(w)\leq g-1$.
By mathematical induction the proposition is proved.
\end{proof}

Proposition~\ref{quadWprop} says that
\beq\label{quadW}
W^{\{4\}}(x,\e)=w+\sum_{g\geq 1}\e^{2g}\sum_{\ell=4g-1}^{5g-1}\frac{\widetilde{A}_{g,\ell}}{(1-24w)^{\ell}}\,,
\eeq
for some rational numbers $\widetilde{A}_{g,4g-2}, \dots, \widetilde{A}_{g,5g-1}$.
Substituting~\eqref{quadW} in~\eqref{differencequad}, 
we find that 
$
U=2^{\frac{11}5}3^{\frac45}\sum_{g\geq 0}\widetilde{A}_{g,5g-1}(2^{\frac85}3^{\frac25}X)^{\frac{1-5g}{2}}
$
again satisfies the Painlev\'e I equation~\eqref{painleveI}.	
Therefore,
\beq
\widetilde{A}_{g,5g-1}
=\frac{48^g}{24}C_g\,, \quad g\geq 1\,,
\eeq
where $C_g$ is again the same universal constant as we found for triangulations~\eqref{univ_const}.

Taking $\bs={\bf 1}_4$ in~\eqref{defVWintro}, we have
\beq
\log W^{\{4\}}(x,\e) = \F^{\{4\}}(x+\e,\e) + \F^{\{4\}}(x-\e,\e) \,-\, 2 \, \F^{\{4\}}(x,\e) \,. \label{quadWF}
\eeq
Similar to~\eqref{wf}, we have
\begin{align}\label{quadwf}
&\F^{\{4\}}(x,\e) 
= \frac{1}{\e^2 \, \p_x^2} \, \sum_{k\geq 0} \frac{(1-2k) B_{2k}}{(2k)!} \, \e^{2k}\p_x^{2k} \bigl( \log W^{\{4\}}(x,\e) \bigr) \,.
\end{align}
\begin{proof}[Proof of Theorem~\ref{quadthm1}]
The $g=0$ case has been proved above (cf.~also~\eqref{F0dub}, \cite{DuY} or 
\cite[Proposition~3, Theorem~5]{DuY2}). 
For $g=1$, the statement can be verified directly using~\eqref{wgrec} (cf.~also~\eqref{fgfmgequal}--\eqref{jetF1} or~\eqref{Feven1general}).
For $g\ge2$, substituting~\eqref{topoWbexpand} in~\eqref{quadwf} and using~\eqref{quadansatzOfSol}, we obtain
\begin{align}\label{quadFgansatz1}
\F^{\{4\}}_g=& \frac{1-2g}{(2g)!}B_{2g}\p_x^{2g-2}(\log w)\nn \\
&+
\sum_{k=1}^{g-1} \frac{1-2k}{(2k)!}B_{2k}\p_x^{2k-2}
\biggl(\sum_{\ell=1}^{g-k}\frac{(-1)^{\ell-1}}{\ell (1-24w)^{5g-5k-\ell}}
\sum_{m_1,\dots,m_{\ell}\geq 1\atop m_1+\cdots+m_{\ell}=g-k}\prod_{i=1}^{\ell}P_{m_i}(w)\biggr)\nn\\
&+\p_x^{-2}
\biggl(\sum_{\ell=1}^{g}\frac{(-1)^{\ell-1}}{\ell (1-24w)^{5g-\ell}}
\sum_{m_1,\dots,m_{\ell}\geq 1\atop m_1+\cdots+m_{\ell}=g}\prod_{i=1}^{\ell}P_{m_i}(w)\biggr)\nn\\
=&\frac{1-2g}{(2g)!}B_{2g}\p_x^{2g-2}(\log w) +\sum_{\ell=4g-4}^{5g-5}\frac{
	\tilde{a}_{g,\ell}}{(1-24w)^{\ell}}
+\alpha_g w(1-12w)+\beta_g\,.
\end{align}
Here $\tilde{a}_{g,4g-4},\dots, \tilde{a}_{g,5g-5}$ are rational numbers and $\alpha_g, \beta_g$ are integration constants.
Using 
$
\p_x (\log w)=\frac{1}{(1-24 w) w}
$
and~\eqref{quaddw} we can prove by induction on $k$ that for $k\geq 1$,
\beq
\p_x^k (\log w) = O(w^{-2k})\,, \quad w \to \infty\,.
\eeq
Then 
from~\eqref{fgfmgequal} (cf.~\cite[Theorem 5]{DuY2}) we know that 
$\F^{\{4\}}_g \to 0$ as $w\to\infty$. Therefore, $\alpha_g=\beta_g=0$.
\end{proof}

Multiplying both sides of~\eqref{quadFform} by $dx/x^{3-2g+k}$ and taking the residue at $x=0$, we obtain
\begin{align}
n_{g}(4^k)&=\tilde{\sigma}_{g,k}+12^{k-2g+2}k! \sum_{m=0}^{k-2g+2}\binom{2k-4g+4-m}{k-2g+2-m}\Bigl(\sum_{\ell=4g-4}^{5g-5}\tilde{a}_{g,\ell}(\ell-1)_m\Bigr)\frac{2^m}{m!}\,,
\end{align}
where
$
\tilde{\sigma}_{g,k}\coloneqq k!\mathop{\rm res}\limits_{w=0}\frac{(1-24w)dw}{w^{k-2g+3}(1-12w)^{k-2g+3}}\frac{(1-2g)B_{2g}}{(2g)!}\p_x^{2g-2}(\log w)
$.

Another direct consequence of Theorem~\ref{quadthm1} is that 
	for $g\geq2$, $\F^{\{4\}}_g(x)$ admits the form
	\beq\label{quadFform2}
	\F^{\{4\}}_g(x)
	=\sum_{k=1}^{5g-5}\frac{\tilde{b}_{g,k}}{(1-24w)^k}
	+\sum_{k=1}^{2g-2}\frac{\tilde{b}'_{g,k}}{w^{k}}\,,
	\eeq
where $\tilde{b}_{g,1},\dots,\tilde{b}_{g,5g-5},\tilde{b}'_{g,1},\dots,\tilde{b}'_{g,2g-2}$ are rational numbers satisfying $\tilde{b}_{g,2}=\tilde{b}_{g,4}=\cdots=\tilde{b}_{g,4g-6}=0$.

Setting 
\beq\label{quadpchange}
p=\frac{24w}{1-24w}
\eeq
leads to the following corollary.
\begin{cor}\label{quadthm3}
For $g=1$, 
\beq
\F^{\{4\}}_1(x) =-\frac1{12}\log x+\frac1{12} \log \biggl(\frac{p+2}2\biggr)+\zeta'(-1)\,.
\eeq
	For $g\geq 2$, $\F^{\{4\}}_g(x)$ has the following expression:
	\beq\label{quadFform3}
	x^{2g-2}\F^{\{4\}}_g(x)
	=\frac{B_{2g}}{4g(g-1)}
	+\sum_{m=1}^{5g-5} \tilde{c}_{g,m} \, p^m,
	\eeq
	where $\tilde{c}_{g,1}, \dots, \tilde{c}_{g,5g-5}$ are rational numbers. Moreover,
	$\tilde{c}_{g,1}=\cdots=\tilde{c}_{g,2g-2}=0$.
\end{cor}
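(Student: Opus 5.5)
The plan mirrors the proof of Corollary~\ref{trithm3}, with Lemma~\ref{quadDxlem} in place of Lemma~\ref{lemDx}. First I rewrite everything in terms of $p$. From \eqref{quadpchange} we have $1-24w=\frac{1}{1+p}$ and $24w=\frac{p}{1+p}$. From \eqref{quadwgenus0eqn}, $x=w(1-12w)$, so
\beq
24x=\frac{p(p+2)}{2(1+p)^2}\,.
\eeq
For $g=1$, substitute these into Theorem~\ref{quadthm1}:
\beq
-\frac1{12}\log w-\frac1{12}\log(1-24w)=-\frac1{12}\log\bigl(w(1-24w)\bigr)\,,\qquad w(1-24w)=\frac{p}{24(1+p)^2}\,.
\eeq
Comparing with $\log x=\log\frac{p(p+2)}{48(1+p)^2}$ gives $-\frac1{12}\log w(1-24w)=-\frac1{12}\log x+\frac1{12}\log\frac{p+2}{2}$, which is the $g=1$ statement.

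For $g\ge2$, the two terms of \eqref{quadFform} are handled separately.

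The second term: since $(1-24w)^{-\ell}=(1+p)^{\ell}$, it equals $\sum_{\ell=4g-4}^{5g-5}\tilde a_{g,\ell}(1+p)^\ell$, a polynomial in $p$ of degree $\le 5g-5$. Multiplying by $x^{2g-2}$, which has the factor $(1+p)^{-(4g-4)}$, still leaves a polynomial, because $\ell\ge 4g-4$. This is exactly where the lower bound $\ell\ge4g-4$ of Theorem~\ref{quadthm1} is used. The degree is $(2g-2)\cdot 2+(\ell-(4g-4))\le 5g-5$.

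The first term: I prove the analogue of \eqref{trlogdxInp}, namely
\beq
x^k\p_x^k(\log w)=-(k-1)!+\sum_{\ell} G_{k,\ell}\,p^{k+\ell}\,,\qquad \text{a polynomial in $p$ of degree}\le 2k\,.
\eeq
I would argue by induction using $x\p_x=\frac{x}{1-24w}\p_w$ together with $\p_x\log w=\frac{1}{w(1-24w)}$. In terms of $p$, $x\p_x\log w=\frac{p+2}{2}$, and $x\p_x$ acts on polynomials in $p$ as the operator
\beq
x\,\frac{dp}{dx}\,\p_p=\frac{p(p+2)}{2(1+p)^2}\cdot(1+p)^2\cdot\frac{(1+p)}{\,\cdot\,}\ldots
\eeq
I will compute this carefully. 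Since $dp/dw=24(1+p)^2$ and $dw/dx=1+p$, we get $x\p_x=\frac{p(p+2)(1+p)}{2}\,\p_p$. Rewriting $x^k\p_x^k$ in terms of the powers $(x\p_x)^j$ (via Stirling numbers) then shows that it maps polynomials to polynomials, raising the degree by at most $2$ per application. The constant term is $x^k\p_x^k\log x=-(k-1)!\cdot(-1)^{k}\ldots$. I only need it to be a constant independent of $p$, so the sign does not matter here.

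With $k=2g-2$, the total degree is $\le 4g-4\le 5g-5$. Hence $x^{2g-2}\F_g^{\{4\}}$ is a polynomial in $p$ of degree $\le 5g-5$.

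The constant term is found by setting $p=0$, i.e. $x=0$. From \eqref{Fgexplicit}, $x^{2g-2}\F_g^{\{4\}}-\frac{B_{2g}}{4g(g-1)}\in x^{2g-1}\QQ[[x]]$, because $n_g(4^k)$ contributes $x^{2-2g+k}$ and so $x^{2g-2}\cdot x^{2-2g+k}=x^{k}$. This actually gives a series in $x\QQ[[x]]$ after the shift, and the sharper vanishing uses $k\ge 2g-1$. Indeed, $n_g(4^k)=0$ unless the Euler characteristic allows it: faces $=k+2-2g\ge1$.

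Since $p=48x+O(x^2)$ is an invertible change of variables, vanishing of the $x$-coefficients of orders $1,\dots,2g-2$ forces $\tilde c_{g,1}=\dots=\tilde c_{g,2g-2}=0$.

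The main obstacle is bookkeeping rather than any conceptual difficulty: checking that $x^k\p_x^k(\log w)$ has no negative powers of $p$, and that its degree bound holds. I would double-check both by computing $k=1,2$ explicitly.
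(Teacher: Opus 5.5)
Your proposal is correct and follows essentially the same route as the paper, which proves Corollary~\ref{trithm3} this way and omits the $b=4$ case as analogous. You write $x^{k}\p_x^{k}(\log w)$ as a polynomial in $p$ via $x\p_x=\frac{p(p+1)(p+2)}{2}\p_p$, substitute into Theorem~\ref{quadthm1} using $\ell\ge 4g-4$, and get $\tilde c_{g,1}=\cdots=\tilde c_{g,2g-2}=0$ from $n_g(4^k)=0$ for $k\le 2g-2$. One harmless slip: $p=24x+O(x^2)$, not $48x+O(x^2)$, which does not affect the argument since only the nonvanishing of the linear coefficient is used.
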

\begin{proof}
The proof based on Theorem~\ref{quadthm1} is similar to that of Corollary~\ref{trithm3}, so is omitted.
\end{proof}
Multiplying both sides of~\eqref{quadFform3} by $dx/x^{k+1}$ and taking the residue at $x=0$, we obtain
\begin{align}
n_{g}(4^{k})=24^k k!\sum_{\ell=2g-1}^{k} \tilde{c}_{g,\ell}\sum_{m=0}^{k-\ell}
\frac{(2k-m)_m }{m!}\frac{(k+1)_{k-\ell-m}}{(k-\ell-m)!}\,(-2)^{-k+\ell+m}
\end{align}
($\tilde{c}_{g,\ell}$ are defined as~0 if $\ell>5g-5$).

Following~\cite{E}, introduce $q$ by
\beq\label{quadqchange}
w=\frac{q-1}{12q}\,.
\eeq
\begin{cor}\label{quadthm4}
For $g=1$,
\beq
\F^{\{4\}}_1(x) =-\frac1{12}\log x-\frac1{12}\log (2-q)+\zeta'(-1).
\eeq
	For $g\geq2$, $\F^{\{4\}}_g(x)$ admits an expression of the form
	\beq\label{quadFform4}
	x^{2g-2}\F^{\{4\}}_g(x)
	=\sum_{k=2g-2}^{5g-5}\frac{\tilde{r}_{g,k}}{(2-q)^{k}}\,,
	\eeq
	where $\tilde{r}_{g,2g-2},\dots,\tilde{r}_{g,5g-5}$ are rational numbers.
\end{cor}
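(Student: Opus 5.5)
The plan is to rewrite everything in Theorem~\ref{quadthm1} in terms of the variable
$t\coloneqq 1/(2-q)$ and to check that each piece is a polynomial in $t$ whose lowest and highest powers lie in the stated range.

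\textbf{Change of variables.} From \eqref{quadqchange} we have $1-12w=1/q$ and $1-24w=(2-q)/q$. Then \eqref{quadwgenus0eqn} gives
\[
x=w(1-12w)=\frac{q-1}{12q^2},
\qquad\text{so}\qquad w=xq .
\]

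\textbf{Genus one.} Substituting these into the $g=1$ formula of Theorem~\ref{quadthm1} gives
\[
-\tfrac1{12}\log w-\tfrac1{12}\log(1-24w)
=-\tfrac1{12}\log(xq)-\tfrac1{12}\log\tfrac{2-q}{q}
=-\tfrac1{12}\log x-\tfrac1{12}\log(2-q),
\]
which is the claimed $g=1$ identity.

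\textbf{The sum over $\ell$ for $g\ge2$.} Multiply \eqref{quadFform} by $x^{2g-2}$ and treat the two groups of terms separately. For the sum, each term becomes
\[
\frac{x^{2g-2}}{(1-24w)^{\ell}}
=\frac{(q-1)^{2g-2}\,q^{\ell-4g+4}}{12^{2g-2}\,(2-q)^{\ell}},
\qquad 4g-4\le \ell\le 5g-5 .
\]
The exponent $\ell-4g+4$ is nonnegative. Write $q=2-(2-q)$, so that the numerator is a polynomial in $(2-q)$ of degree $\ell-2g+2$. Dividing by $(2-q)^{\ell}$ then produces only the powers $t^k$ with $2g-2\le k\le \ell\le 5g-5$, as required.

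\textbf{The Bernoulli term for $g\ge2$.} Here we must show that $x^{2g-2}\p_x^{2g-2}(\log w)$ is a polynomial in $t$ with all exponents in $[2g-2,\,5g-5]$; this is the step needing an actual argument. Set $D\coloneqq x\p_x$, so that $x^k\p_x^k=\prod_{j=0}^{k-1}(D-j)$.

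First express $D$ in $t$. From \eqref{quaddw} and $dw/dq=1/(12q^2)$ we get $D=\frac{q(q-1)}{2-q}\p_q$. Since $q=2-1/t$ gives $\p_q=t^2\p_t$, this becomes
\[
D=t(t-1)(2t-1)\,\p_t .
\]

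Next compute $D\log w$. Write $\log w=\log x+\log q$ with $\log q=\log(2t-1)-\log t$. Then $D\log x=1$ and $D\log q=t-1$, so $D\log w=t$.

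Now track how $D$ acts on monomials. We have $D\,t^a=a\,t^a(t-1)(2t-1)$, so
\[
(D-j)\,t^a\equiv (a-j)\,t^a \pmod{t^{a+1}} .
\]
Two consequences follow. First, $D$ raises the degree in $t$ by at most $2$. Second, if a polynomial has $t$-adic valuation $\ge j$, then $(D-j)$ kills its $t^j$ coefficient, so the result has valuation $\ge j+1$.

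Apply this inductively. Starting from $D\log w=t$, which has valuation $1$ and degree $1$, apply $D-1,\dots,D-(k-1)$ in turn. The result $x^k\p_x^k\log w=\prod_{j=1}^{k-1}(D-j)\,t$ has valuation $\ge k$ and degree $\le 2k-1$. With $k=2g-2$, the exponents lie in $[2g-2,\,4g-5]\subset[2g-2,\,5g-5]$.

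Combining the two groups of terms gives \eqref{quadFform4}. The coefficients $\tilde r_{g,k}$ are rational because $\tilde a_{g,\ell}$ and $B_{2g}$ are rational and all the expansions above have rational coefficients.
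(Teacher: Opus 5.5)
Your proof is correct and follows essentially the route the paper intends. The paper omits this proof, but as in Corollary~\ref{trithm3} the idea is to substitute \eqref{quadqchange} into Theorem~\ref{quadthm1} and control $x^k\p_x^k(\log w)$ via its expansion in powers of $1/(2-q)$, which is Lemma~\ref{evendxlogw} at $\nu=2$. Your $t$-adic valuation argument with $D=t(t-1)(2t-1)\p_t$ is a tidy alternative to the coefficient recursion there, and it gives the same exponent range $[k,2k-1]$.
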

The proof is again omitted.

It was proved in~\cite{E2} that, for any fixed $\nu\geq 2$, the generating function
$
e_g(s)\coloneqq\sum_{k\geq 1}\frac{n_g((2\nu)^k)}{k!}s^k
$
admits the expression
\beq\label{Eregexpression}
e_g(s)=C^{(g)}+\sum_{\ell=2g-2}^{5g-5}\frac{\tilde{r}_{g,\ell}(\nu)}{(\nu-(\nu-1)z_0)^\ell}\,, \quad g\geq 2\,.
\eeq
Here, $\tilde{r}_{g,2g-2}(\nu),\dots,\tilde{r}_{g,5g-5}(\nu)$ are some rational numbers, $C^{(g)}$ is a certain constant 
independent of~$\nu$, and $z_0=z_0(s)=1+\frac{(2\nu)!}{\nu!(\nu-1)!}s+\cdots$ is the unique solution to 
\beq\label{z0eqn}
z_0=1+\frac{(2\nu)!}{\nu!(\nu-1)!}s\, z_0^{\nu}\,,
\eeq
Here, 
$z_0(x^{\nu-1})=q(x)$.
For the case when $\nu=2$, Eynard~\cite{Eynard} gave an expression of $e_g(s)$, which is equivalent to~\eqref{Eregexpression}. 
Ercolani--Lega--Tippings~\cite[Conjecture~5.2]{ELT} conjectured that
\beq\label{ELTconj}
C^{(g)}=-\frac{B_{2g}}{4g(g-1)},\quad g\geq 2.
\eeq
By using~\eqref{Fgexplicit}, \eqref{qeqn}, \eqref{z0eqn} and based on~\eqref{Eregexpression}, we find that Ercolani--Lega--Tippings's conjecture~\eqref{ELTconj}
is equivalent to Conjecture~A.

\begin{proof}[Proof of Theorem~A]
By comparing~\eqref{evenFform4} with~\eqref{quadFform4}.
\end{proof}
We note that the topological recursion (see e.g.~\cite{Eynard}) should also lead to a proof of Theorem~A.

Now we use a formula from~\cite{E2} to give a second proof of Theorem~A. It was already shown 
in~\cite{E2} that the $C^{(g)}$ for any $\nu$ satisfy the following recursion:
\beq\label{ErcolaniRecursive}
\frac{(2g-1)!}{(2g+2)!}-\frac{(2g-1)!}{12(2g)!}
+\sum_{k=2}^{g}\frac{(1-2g)_{2g-2k+2}}{(2g-2k+2)!}C^{(k)}=0\,, \quad g\geq 2\,.
\eeq
\begin{proof}[A second proof of Theorem~A]
Define $\widetilde{C}^{(0)}\coloneqq1$, $\widetilde{C}^{(1)}\coloneqq1/6$, and 
$\widetilde{C}^{(g)}\coloneqq-4g(g-1)C^{(g)}$ for $g\geq 2$.
Then it follows from~\eqref{ErcolaniRecursive} that
\beq\label{Recursive}
\sum_{k=0}^g \frac1{(2g-2k+2)!} \frac{2k-1}{(2k)!}\widetilde{C}^{(k)}
=-\frac12 \delta_{g,0}\,, \quad g\geq 0\,.
\eeq
For a formal variable $y$, mutiplying~\eqref{Recursive} by $y^g$ and summing over $g$, we find
\beq\label{generatingSeries}
\biggl(\sum_{\ell\geq 0}\frac{y^{2\ell+2}}{(2\ell+2)!}\biggr)
\biggl(\sum_{k\geq 0}\frac{2k-1}{(2k)!}\widetilde{C}^{(k)}y^{2k-2}\biggr)=-\frac12\,,
\eeq 
thus
\beq\label{seriesStep1}
\sum_{k\geq 0}\frac{2k-1}{(2k)!}\widetilde{C}^{(k)}y^{2k-2}=-(e^{\frac{y}2}-e^{-\frac{y}2})^{-2}\,.
\eeq
Integrating~\eqref{seriesStep1} with respect to~$y$ and multiplying both sides by $y$, we find
\begin{align}
\sum_{k\geq 0}\frac{\widetilde{C}^{(k)}}{(2k)!}y^{2k}=\frac{y}2\frac{e^y+1}{e^y-1}
=\frac12\Bigl(\frac{y}{e^y-1}+\frac{-y}{e^{-y}-1}\Bigl)=\sum_{k\geq 0}\frac{B_{2k}}{(2k)!}y^{2k}\,.
\end{align}
Hence $\widetilde{C}^{(g)}=B_{2g}$. 
Theorem~A is proved.
\end{proof}

\section{More on the $b=2\nu$ case}\label{evenangulation}
In this section, we prove Theorem~\ref{rglthm} and give some more discussions on the $b=2\nu$ case.

As a particular example of equation~\eqref{evengenus0eqn},
we know that $w$ satisfies
\beq\label{2nugenus0eqn}
x=w-\frac{(2\nu)!}{\nu!(\nu-1)!}w^{\nu}\,.
\eeq
Obviously the variable $q$ introduced in~\eqref{qeqn} is related to~$w$ by
$q=w/x$.
\begin{lemma}\label{evendxlogw}
For $k\geq1$, we have
\beq\label{xdxIny}
\p_x^k (\log w)
=\frac1{x^k}\sum_{\ell=0}^{k-1}\frac{(-1)^{k-\ell+1}e_{k,\ell}(\nu)}{(\nu-(\nu-1)q)^{k+\ell}}\,,
\eeq
where $e_{k,\ell}(\nu)\in \ZZ[\nu]$, $\ell=0,\dots,k-1$, with $e_{k,k-1}(\nu)=(2k-3)!!\nu^{k-1}$ and
$\deg e_{k,\ell}(\nu)\leq k-1$.
\end{lemma}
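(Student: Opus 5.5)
Set $D\coloneqq\nu-(\nu-1)q$ and work throughout with $q=w/x$ as the basic variable, proving the formula by induction on $k$. Write $c_\nu\coloneqq\frac{(2\nu)!}{\nu!(\nu-1)!}$. By \eqref{qeqn}, $c_\nu x^{\nu-1}q^{\nu}=q-1$. Differentiating \eqref{qeqn} in $x$ gives
\[
-q'+c_\nu\bigl((\nu-1)x^{\nu-2}q^\nu+\nu x^{\nu-1}q^{\nu-1}q'\bigr)=0 .
\]
Multiply by $x$ and substitute $c_\nu x^{\nu-1}q^{\nu}=q-1$. This yields $x q'\bigl(\nu(q-1)/q-1\bigr)=-(\nu-1)(q-1)$, so
\[
x\,\p_x q=\frac{(\nu-1)q(q-1)}{D}.
\]
Hence also $x\p_x D=-(\nu-1)^2q(q-1)/D$.

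For the base case $k=1$, use $\log w=\log x+\log q$. Then
\[
x\p_x\log w=1+\frac{(\nu-1)(q-1)}{D}=\frac{D+(\nu-1)(q-1)}{D}=\frac{1}{D}.
\]
This is \eqref{xdxIny} with $k=1$ and $e_{1,0}=1$.

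For the inductive step, write $\p_x^k\log w=x^{-k}\Phi_k(q)$ and apply $\p_x$, which gives $x^{-k-1}\bigl(-k\Phi_k+x\p_x\Phi_k\bigr)$. For a term $D^{-m}$ we have
\[
x\p_x D^{-m}=\frac{m(\nu-1)^2q(q-1)}{D^{m+2}}.
\]
The key rewriting is to express $q$ in terms of $D$: $(\nu-1)q=\nu-D$ and $(\nu-1)(q-1)=1-D$. Therefore $(\nu-1)^2q(q-1)=(\nu-D)(1-D)=\nu-(\nu+1)D+D^2$, and
\[
x\p_x D^{-m}=m\bigl(\nu D^{-m-2}-(\nu+1)D^{-m-1}+D^{-m}\bigr).
\]
So $-k\Phi_k+x\p_x\Phi_k$ is again a Laurent polynomial in $D^{-1}$, with coefficients in $\ZZ[\nu]$, and exponents running from $k+1$ up to $2k+1$. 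Explicitly, with $m=k+\ell$ the recursion reads
\[
(-1)^{k-\ell'+2}e_{k+1,\ell'}=\sum_{\ell}(-1)^{k-\ell+1}e_{k,\ell}\,[\,\text{coefficients above}\,].
\]
The $D^{-k}$ terms cancel: the term with $m=k$ contributes $(-k+k)D^{-k}=0$. This cancellation is what makes the lowest exponent increase to $k+1$, and it is the only delicate point of the range claim.

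For the degree bound, each step multiplies coefficients by at most one factor of $\nu$, through the terms $\nu D^{-m-2}$ and $(\nu+1)D^{-m-1}$, so $\deg e_{k+1,\ell}\le k$. The top coefficient $e_{k+1,k}$ comes only from $e_{k,k-1}$, via $m=2k-1$ and the term $m\nu D^{-m-2}$. This gives $e_{k+1,k}=(2k-1)\nu\,e_{k,k-1}=(2k-1)!!\,\nu^k$, with sign $(-1)^{2}=+$, which is consistent with the claimed form.

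What remains is to check the signs $(-1)^{k-\ell+1}$. They are just bookkeeping: define $e_{k,\ell}$ by this normalization, and then the only requirement is that the $e_{k,\ell}$ lie in $\ZZ[\nu]$, which holds because every coefficient above is an integer polynomial in $\nu$. I expect no step to be a real obstacle beyond carefully tracking the cancellation of the $D^{-k}$ term.
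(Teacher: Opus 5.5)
Your proof is correct and is essentially the paper's argument: an induction on $k$ in which your identity $x\p_x D^{-m}=m\bigl(\nu D^{-m-2}-(\nu+1)D^{-m-1}+D^{-m}\bigr)$ produces exactly the paper's recursion $e_{k+1,\ell}=(\ell+1)e_{k,\ell+1}+(k+\ell)(\nu+1)e_{k,\ell}+(k+\ell-1)\nu\,e_{k,\ell-1}$; all three contributions carry the same sign $(-1)^{k-\ell}$, and integrality, the degree bound and the top coefficient follow from this recursion just as you say. You also supply details the paper leaves implicit, namely the base case $x\p_x\log w=1/D$ and the derivation of the recursion itself.
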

\begin{proof}
For $k=1$, the statement can be proved easily 
(here we recall the convention that $(-1)!!=1$).
Suppose that the statement is true for $k=m$. Then for $k=m+1$, we have
\beq
\p_x^{m+1} (\log w)
=\frac1{x^{m+1}}\sum_{\ell=0}^{m}\frac{(-1)^{m-\ell+2}e_{m+1,\ell}(\nu)}{(\nu-(\nu-1)q)^{m+1+\ell}}\,,
\eeq
where 
\beq\label{evenrecursiveE}
e_{m+1,\ell}(\nu)=(\ell+1)e_{m,\ell+1}(\nu)+(m+\ell)(\nu+1)e_{m,\ell}(\nu)+(m+\ell-1)\nu \,e_{m,\ell-1}(\nu)\,,
\eeq 
with $e_{m,-2}(\nu)=e_{m,-1}(\nu)=e_{m,m}(\nu)=e_{m,m+1}(\nu)=0$.
Therefore, $e_{m+1,\ell}(\nu)\in\ZZ[\nu]$, $\deg e_{m,\ell}(\nu)\leq m$
 and $e_{m+1,m}=(2m-1)!!\nu^m$. 
The lemma is proved.
\end{proof}
Using~\eqref{fgfmgequal}, \eqref{jetF1} and Lemma~\ref{evendxlogw}, we obtain
\beq
\F_1^{\{2\nu\}}(x)=-\frac1{12}\log x-\frac1{12}\log (\nu-(\nu-1)q)+\zeta'(-1)\,.
\eeq
\begin{proof}[A third proof of Theorem~A]
According to the Hodge--GUE correspondence \cite{DLYZ20,DuY2}, the function $\F^{\{2\nu\}}_g(x)$ is a linear combination of
$\prod_{i=1}^{\ell(\lambda)}\p_x^{\lambda_i}(\log w)/(\p_x(\log w))^{\ell+2g-m-1-k}$
with rational coefficients,
where $1\leq m\leq g$, $0\leq k\leq 3m-3$, $\ell\geq 1$, and $\lambda$ are partitions satisfying $\ell(\lambda)=2g-2m+\ell$, $|\lambda|=4g-m-3-k+\ell$. 
	From~\eqref{xdxIny} we know that
	\beq\label{vlambda}
	\frac{\prod_{i=1}^{\ell(\lambda)}\p_x^{\lambda_i}(\log w)}{(\p_x(\log w))^{\ell+2g-m-1-k}}=\frac{(\nu-(\nu-1)q)^{2-2g}}{x^{2g-2}}
	\sum_{0\leq s_i\leq\lambda_i-1}
	\frac{\prod_{i=1}^{\ell(\lambda)}(-1)^{\lambda_i-s_i+1}e_{\lambda_i,s_i}(\nu)}{(\nu-(\nu-1)q)^{|s|}}\,.
	\eeq
Formula~\eqref{evenFform4} then follows from the fact that $-(5g-5)\leq 2-2g-|s|\leq -(2g-2)$.
\end{proof}

\begin{proof}[Proof of Theorem~\ref{rglthm}]
The fact that $r_{g,\ell}(\nu)\in \QQ[\nu]$ follows from~\eqref{vlambda} and that $e_{g,\ell}(\nu)\in \ZZ[\nu]$,
which we proved in Lemma~\ref{evendxlogw}.
Since $\deg\,e_{k,\ell}(\nu)\leq k-1$, we have
\beq
\deg\bigl(e_{\lambda_1,s_1}(\nu)\cdots e_{\lambda_{\ell(\lambda)},s_{\ell(\lambda)}}(\nu)\bigr)
=|\lambda|-\ell(\lambda)=2g+m-3-k\leq 3g-3\,,
\eeq
for $0 \leq s_i \leq\lambda_i$, $i=1,\dots,\ell(\lambda)$. 
Hence $\deg\,r_{g,\ell}(\nu)\leq 3g-3$.
\end{proof}

Similar to~\cite{DYZ}, by using Theorem~A we can prove that, for $2m-2+j>0$,
\beq\label{evendxFm}
\p_x^j (\F^{\{2\nu\}}_m(x))=\frac1{x^{2m-2+j}}\sum_{\ell=2m-2+j}^{5m-5+2j}\frac{r_{m,\ell,j}(\nu)}{(\nu-(\nu-1)q)^{\ell}}\,.
\eeq
By~\eqref{defVWintro} we have
\beq\label{evenWstep1}
W^{\{2\nu\}}_{g}(x)=w\sum_{n= 1}^{g}\frac{2^n}{n!}\sum_{k_1+m_1,\dots,k_n+m_n\geq 1 \atop |k|+|m|=g}
\prod_{i=1}^{n}\frac{\p_x^{2k_i+2}(\F_{m_i}^{\{2\nu\}}(x))}{(2k_i+2)!}\,,\quad g\geq 1\,.
\eeq 
Using~\eqref{evendxFm} and \eqref{evenWstep1}, we obtain 
\beq\label{evenWg}
x^{2g-1} W^{\{2\nu\}}_{g}(x)=q\sum_{\ell=2g}^{5g-1}\frac{t_{g,\ell}(\nu)}{(\nu-(\nu-1)q)^{\ell}}\,, \quad g\ge1\,.
\eeq
Using Theorem~\ref{rglthm}, we  show that $r_{m,\ell,j}(\nu)$ are polynomials in $\nu$ of degree at most $3m-3+j$.
Hence, we obtain:

\begin{cor}\label{propwgconj}
For any $g\ge1$, $t_{g,\ell}(\nu)$ are polynomials of~$\nu$ with $\deg\, t_{g,\ell}(\nu)\leq 3g-1$.
\end{cor}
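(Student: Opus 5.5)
Write $D\coloneqq \nu-(\nu-1)q$ throughout. The plan is to plug~\eqref{evendxFm}, with its coefficients $r_{m,\ell,j}(\nu)$ shown to be polynomials of degree at most $3m-3+j$, into formula~\eqref{evenWstep1}, and then track two things: the power of $x$ and the degree in $\nu$.

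\textbf{Step 1: the degree bound for $r_{m,\ell,j}(\nu)$.} This is where the real work is.

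For $m\ge2$, start from Theorem~A and Theorem~\ref{rglthm}. Differentiate~\eqref{evenFform4} $j$ times, one derivative at a time. We need $\p_x q$ in terms of $q$ and $D$. Since $q=w/x$ and~\eqref{2nugenus0eqn} gives $\p_x w = w/(xD)$ (compare Lemma~\ref{evendxlogw} with $k=1$), we get
\beq
\p_x q=\frac{q(1-D)}{xD}\,.
\eeq
Using $q=(\nu-D)/(\nu-1)$, this becomes
\beq
x\,\p_x D = -\frac{(\nu-D)(1-D)}{D}\,.
\eeq
Hence each application of $x\p_x$ to a term $D^{-\ell}$ gives
\beq
x\p_x D^{-\ell}=\ell\,(\nu-D)(1-D)\,D^{-\ell-2}\,,
\eeq
whose coefficient in $\nu$ has degree one higher. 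Also, $\p_x$ acting on $x^{-n}$ gives $-n\,x^{-n-1}$, which does not raise the degree.

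So one derivative maps the range of powers $D^{-\ell}$ with $\ell\in[a,b]$ to the range $[a+1,b+2]$ (after rewriting $\nu-D$ and $1-D$ in terms of powers of $D$), multiplies by one extra power of $1/x$, and raises the polynomial degree by at most one. By induction on $j$ this gives~\eqref{evendxFm} with coefficients $r_{m,\ell,j}(\nu)$ in $\QQ[\nu]$ of degree at most $3m-3+j$.

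The cases $m=0,1$ have to be handled separately:
\begin{itemize}
\item For $m=1$, use the explicit formula $\F_1^{\{2\nu\}}=-\frac1{12}\log x-\frac1{12}\log D+\zeta'(-1)$. Its first derivative is a polynomial of degree at most $1$ divided by $xD^2$, and the same induction then gives degree at most $j$.
\item For $m=0$, we have $\p_x^2\F_0^{\{2\nu\}}=\log w$, so $\p_x^{j}\F_0^{\{2\nu\}}=\p_x^{j-2}\log w$ for $j\ge3$. Lemma~\ref{evendxlogw} then gives coefficients of degree at most $j-3$, consistent with the bound $3m-3+j$.
\end{itemize}
Since~\eqref{evenWstep1} only uses the derivatives $\p_x^{2k+2}$, the case $j\ge2$ is all that is needed.

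\textbf{Step 2: plug into~\eqref{evenWstep1}.} The factor $w$ out front equals $xq$. Each factor in the product has degree at most $3m_i-3+2k_i+2=3m_i+2k_i-1$ in $\nu$ and carries $x^{-(2m_i+2k_i)}$. The total power of $x$ is therefore $1-2g$, which gives the left-hand side $x^{2g-1}W_g$ of~\eqref{evenWg}.

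The total degree in $\nu$ is at most
\beq
\sum_{i=1}^n(3m_i+2k_i-1)=3g-\sum_{i=1}^n k_i-n\,.
\eeq
This is at most $3g-1$, since $n\ge1$. The rational prefactors $2^n/(n!\,(2k_i+2)!)$ do not depend on $\nu$, so they do not affect the degree.

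\textbf{Step 3: the range of $\ell$.} The range $[2g,5g-1]$ follows by adding up the bounds on the powers of $D$ from Step 1. This part is only bookkeeping.

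\textbf{Main obstacle.} The difficult step is Step 1: making sure that each $x$-derivative raises the $\nu$-degree by at most one. That is exactly the role of the identity $x\p_x D=-(\nu-D)(1-D)/D$. Once this is in place, the degree count in Step 2 is automatic, and it gives $\deg\, t_{g,\ell}(\nu)\le 3g-1$ as claimed in Corollary~\ref{propwgconj}.
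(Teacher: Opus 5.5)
Your proposal is correct and follows the paper's route: Theorem~A together with Theorem~\ref{rglthm} (plus the explicit $m=0,1$ cases, which the paper leaves implicit) gives \eqref{evendxFm} with $\deg r_{m,\ell,j}(\nu)\le 3m-3+j$, and substituting into \eqref{evenWstep1} gives the bound $3g-\sum_i k_i-n\le 3g-1$. The one imprecision is your claim that a derivative sends the range $[a,b]$ to $[a+1,b+2]$: the $D^{-a}$ term actually survives with coefficient $a-n$, so the shift at the bottom uses the fact that the lowest power of $D^{-1}$ equals the power $n$ of $x^{-1}$ at every stage (which does hold here), and this affects only the range of $\ell$, not the degree bound.
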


We note that part of the motivation of Corollary~\ref{propwgconj} comes from~\cite{GL}. Indeed, following~\cite{GL}, define
\beq
Q_{g,k}(\nu) := k![x^{1-2g+(\nu-1)k}]W^{\{2\nu\}}_{g}(x)/\Bigl(\frac{(2\nu)!}{\nu!(\nu-1)!}\Bigr)^k\,.
\eeq
(As shown in~\cite{EMP}, 
$k![x^{1-2g+(\nu-1)k}]W^{\{2\nu\}}_{g}(x)$ counts the number of two-legged $2\nu$-valent genus-$g$ maps with $k$ vertices.)
It then follows from~\eqref{qeqn} and~\eqref{evenWg} that
\beq\label{Qgk}
Q_{g,k}(\nu)=k! \sum_{m=0}^k \binom{\nu k}{k-m}\frac{(\nu-1)^m}{m!}\sum_{\ell=2g}^{5g-1} t_{g,\ell}(\nu)
(\ell-1)_m\,,
\eeq
which, together with Corollary~\ref{propwgconj}, implies  
validity of a conjectural statement in~\cite[Remark~2.9]{GL} that 
$Q_{g,k}(\nu)$ are polynomials in~$\nu$ of degree $3g-1+k$.

Finally, we note that it can be deduced using Theorem~A and a result of~\cite{E2} that
\beq\label{rgCg}
r_{g,5g-5}(\nu)=\frac{\nu^{3g-3}}{12^{g}}\frac{C_g}{(5g-3)(5g-5)}\,, \quad g\ge2\,,
\eeq 
which can also be proved 
using the Hodge--GUE correspondence (cf.~\cite{DLYZ20} and~\cite[Corollary~4.6]{Y2}) and~\cite[(6.10)]{IZ}. 
Here $C_g$ are defined by~\eqref{painleveI}.
By \eqref{evendxFm}--\eqref{evenWg} and~\eqref{rgCg} we have 
\beq
t_{g,5g-1}(\nu)=\frac{\nu^{3g-1}}{12^g}C_g\,.
\eeq


\begin{thebibliography}{99}

\bibitem{AvM}
Adler, M., van Moerbeke, P., Matrix integrals, Toda symmetries, Virasoro constraints, and orthogonal polynomials. Duke Math. J.,~{\bf 80} (1995), 863--911.

\bibitem{ACKM}
Ambj{\o}rn, J., Chekhov, L., Kristjansen, C.F., Makeenko, Yu., 
Matrix model calculations beyond the spherical limit. Nucl. Phys.~B,~{\bf 404} (1993), 127--172.

\bibitem{BCGE}
Belliard, R., Charbonnier, S., Garcia-Failde, E., Eynard, B., 
Topological recursion for generalised Kontsevich graphs and $r$-spin intersection numbers. Sel.~Math.~New Ser.,~{\bf 31} (2025), Paper No.~88.

\bibitem{BIZ}
Bessis, D., Itzykson, C., Zuber, J. B., Quantum field theory techniques in graphical enumeration. Adv. Appl. Math., {\bf 1} (1980), 109--157.

\bibitem{BD}
Bleher, P.M., Dea\~no, A., 
Topological expansion in the cubic random matrix model.
 IMRN {\bf 2013} (2013), 2699--2755.

\bibitem{BG17}
Borot, G., Garcia-Failde, E., 
Simple Maps, Hurwitz Numbers, and Topological Recursion.
Commun. Math. Phys.,~{\bf 380} (2020), 581--654.

\bibitem{BIPZ}
Br\'ezin, E., Itzykson, C., Parisi, P., Zuber, J. B., Planar diagrams. Commun. Math. Phys.,~{\bf 59} (1978), 35--51.

\bibitem{BK}
Br\'ezin, E., Kazakov, V.,
Exactly solvable field theories of closed strings.
Phys. Lett. B,~{\bf 236} (1990), 144--150.

\bibitem{CL17}
Chan, C.-T., Liu, H.-F.,
Graphic enumerations and discrete Painlev\'e equations via random matrix models.
arXiv:1712.09231.

\bibitem{CMS}
Chapuy, G., Marcus, M., Schaeffer, G.,
A Bijection for Rooted Maps on Orientable Surfaces. SIAM J.~Discrete Math., {\bf 23} (2009), 1587--1611.

\bibitem{DW}
Dijkgraaf, R., Witten, E.,
Mean field theory, topological field theory, and multi-matrix models.
Nucl. Phys. B,~{\bf 342} (1990), 486--522.

\bibitem{DM}
Do, N., Manescu, D.,
Quantum curves for the enumeration of ribbon graphs and hypermaps. 
Commun. Number Theory Phys.,~{\bf 8} (2014), 677--701.


\bibitem{DS}
Douglas, M., Shenker, S.,
Strings in less than one dimension.
Nucl.~Phys. B,~{\bf 335} (1990), 635--654.

\bibitem{Du96}
Dubrovin, B., Geometry of 2D topological field theories. In: Francaviglia, M., Greco, S. (eds.) ``Integrable Systems and Quantum Groups" (Montecatini Terme, 1993), Lecture Notes in Math., vol.~{\bf 1620}, pp. 120--348. Springer, Berlin, 1996.

\bibitem{Du2}
Dubrovin, B., Hamiltonian perturbations of hyperbolic PDEs: from classification results to the properties of solutions. 
In: Sidoravicius, V. (ed.) ``New Trends in in Mathematical Physics". Selected Contributions of the XVth International 
Congress on Mathematical Physics, pp. 231--276. Springer, Dordrecht, 2009.

\bibitem{DLYZ20}
Dubrovin, B., Liu, S.-Q., Yang, D., Zhang, Y.,
Hodge--GUE Correspondence and the Discrete KdV Equation.
Commun. Math. Phys.,~{\bf 379} (2020), 461--490.

\bibitem{DuY}
Dubrovin, B., Yang, D., Generating series for GUE correlators. 
Lett. Math. Phys.,~{\bf 107} (2017), 1971--2012.

\bibitem{DuY2}
Dubrovin, B., Yang, D., On cubic Hodge integrals and random matrices.
Commun. Number Theory Phys.,~{\bf 11} (2017), 311--336.

\bibitem{DYZ}
Dubrovin, B., Yang, D., Zagier, D., 
Classical Hurwitz numbers and related combinatorics.
Mosc. Math. J.,~{\bf 17} (2017), 601--633. 

\bibitem{DuZ}
Dubrovin, B., Zhang, Y., Virasoro symmetries of the extended Toda hierarchy. Commun. Math. Phys.,~{\bf 250} (2004), 161--193.

\bibitem{E}
Ercolani, N.M., Caustics, counting maps and semi-classical asymptotics. Nonlinearity,~{\bf 24} (2011), 481--526.

\bibitem{E2}
Ercolani, N.M., Conservation laws of random matrix theory. In: Deift, P., Forrester, P. (eds.)
``Random matrix theory, interacting particle systems, and integrable systems".
Math. Sci. Res. Inst. Publ.,~{\bf 65}, Cambridge University Press, New York, 2014.

\bibitem{ELT2}
Ercolani, N.M., Lega, J., Tippings, B.,
Non-recursive counts of graphs on surfaces.
Enumer. Comb. Appl.,~{\bf 3} (2023), Paper No. S2R20, 24~pp. 

\bibitem{ELT}
Ercolani, N.M., Lega, J., Tippings, B., 
Map enumeration from a dynamical perspective. 
In ``Recent progress in special functions", Contemp. Math.,~{\bf 807}, Amer. Math. Soc., RI, 2024, pp. 85--110. 
arXiv:2308.06369v3.

\bibitem{EMP}
Ercolani, N.M., McLaughlin, K.D.T.-R., Pierce, V.-U.,
Random matrices, graphical enumeration and the continuum limit of toda lattices.
Commun. Math. Phys.,~{\bf 278} (2008), 31--81.

\bibitem{Eynard}
Eynard, B., Counting surfaces. vol.~{\bf 70}. Prog. Math. Phys., Birkh{\"a}user Basel, 414~pp, 2016.

\bibitem{EO09}
Eynard, B., Orantin, N., Topological recursion in enumerative geometry and random matrices. 
J.~Phys.~A: Math.~Theor.,~{\bf 42} (2009), Paper No.~293001.

\bibitem{Flaschka}
Flaschka, H., On the Toda lattice. II. Inverse-scattering solution. Progr. Theoret. Phys.,~{\bf 51} (1974), 703--716.

\bibitem{fik91}
Fokas, A., Its, A., Kitaev, A.,
Discrete Painlev\'e equations and their appearance in quantum gravity.
Commun. Math. Phys.,~{\bf 142} (1991), 313--344.

\bibitem{GMMMO}
Gerasimov, A., Marshakov, A., Mironov, A., Morozov, A., Orlov, A.,
Matrix models of two-dimensional gravity and Toda theory. 
Nucl. Phys. B,~{\bf 357} (1991), 565--618.

\bibitem{GL}
Gharakhloo, R., Latimer, T.,
Combinatorics of even-valent graphs on Riemann surfaces.
arXiv:2505.01633.

\bibitem{GMM}
Giacchetto, A.,  Maity, P., Mazenc, E.A.,
Matrix Correlators as Discrete Volumes of Moduli Space I: Recursion Relations, the BMN-limit and DSSYK.
arXiv:2510.17728.

\bibitem{GM}
Gross, D., Migdal A.,
Non-perturbative two-dimensional quantum gravity.
Phys. Rev. Lett.,~{\bf 64}, (1990), 127--130.

\bibitem{HZ}
Harer, J., Zagier, D., The Euler characteristic of the moduli space of curves. Invent. Math.,~{\bf 85} (1986), 457--485.

\bibitem{IZ}
Itzykson, C., Zuber, J. B., Combinatorics of the modular group. II. The Kontsevich integrals,
Internat. J. Modern Phys. A,~{\bf 7} (1992), 5661--5705.

\bibitem{JK}
Joshi, N., Kitaev, A.V.: 
On Boutroux’s tritronqu\'ee solutions of the first Painlev\'e equation. Stud.
Appl. Math.,~{\bf 107} (2001), 253--291.

\bibitem{Ka}
Kapaev, A.A.: Quasi-linear Stokes phenomenon for the Painlev\'e first equation. 
J. Phys. A,~{\bf 37}
(2004), 11149--11167.

\bibitem{KLS}
Kramer, R., Lewa\'nski, D., Shadrin, S., 
Quasi-polynomiality of monotone orbifold Hurwitz numbers and Grothendieck’s dessins d'enfants.
Doc. Math.,~{\bf 24} (2019), 857--898.

\bibitem{MMMM}
Makeenko, Y., Marshakov, A., Mironov, A., Morozov, A.,
Continuum versus discrete Virasoro in one-matrix models. 
Nucl. Phys. B,~{\bf 356} (1991), 574--628. 

\bibitem{Manakov}
Manakov, S. V., Complete integrability and stochastization of discrete dynamical systems.
J. Experiment. Theoret. Phys.,~{\bf 67}, 543--555 (in Russian). English translation in: Soviet Physics JETP,~{\bf 40} (1974), 269--274.

\bibitem{Mehta}
Mehta, M.L.,
Random Matrices, 2nd edn.
Academic Press, Cambridge, 1991.

\bibitem{Mo}
Morozov, A.,
Integrability and matrix models.
Phys.-Uspekhi,~{\bf 37} (1994), 1--55.

\bibitem{Nor}
Norbury, P., Counting lattice points in the moduli space of curves. 
Math. Res. Lett.,~{\bf 17} (2010), 467--481.

\bibitem{T}
Tutte, W.T., 
On the enumeration of planar maps.
Bull. Amer. Math. Soc.,~{\bf 74} (1968), 64--74.

\bibitem{W}
Witten, E., Two-dimensional gravity and intersection theory on moduli space.
Surv. Differ. Geom.,~{\bf 1} (1991), 243--310.

\bibitem{Y}
Yang, D., On tau-functions for the Toda lattice hierarchy. Lett. Math. Phys.,~{\bf 110} (2020), 555--583.

\bibitem{Y2}
Yang, D., GUE via Frobenius Manifolds. I. From matrix gravity to topological gravity and back. Acta Math. Sin.,~{\bf 40} (2024), 383--405.

\bibitem{YZZ}
Yang, D., Zagier, D., Zhang, Y., Masur-Veech volumes of quadratic differentials and their asymptotics.
J. Geom. Phys.,~{\bf 158} (2020), Paper No.~103870, 12 pp.

\bibitem{YZ}
Yang, D., Zhou, J., Grothendieck's dessins d'enfants in a web of dualities. III. 
J. Phys. A,~{\bf 56} (2023), Paper No.~055201, 34 pp.
\end{thebibliography}
\end{document}